%
%

\documentclass[aip,jmp,preprint,a4paper]{revtex4-1}
\usepackage{cancel}
\usepackage{amsmath}
\usepackage{amssymb}
\usepackage{amsthm}
\usepackage{graphicx}
\usepackage{dcolumn}
\usepackage{bm}
\usepackage{upgreek}
\usepackage{esint}
\usepackage{amsthm}

\usepackage{commands}

\theoremstyle{definition}
\newtheorem{definition}{Definition}

\theoremstyle{remark}
\newtheorem{remark}{Remark}

\newtheorem{theorem}{Theorem}

\draft 

\begin{document}


\title{Improved accuracy in degenerate variational integrators for guiding center and magnetic field line flow} 
\author{J. W. Burby}
\affiliation{Los Alamos National Laboratory, Los Alamos, NM 87544}
\author{John M.~Finn}
\affiliation{Tibbar Plasma Technologies, Los Alamos, NM 87544}
\author{C.~Leland Ellison}
\affiliation{Lawrence Livermore National Laboratory, Livermore, CA 94550}

\begin{abstract}
First-order accurate degenerate variational integration (DVI) was introduced
in Ref.\,\onlinecite{Ellison_2018} for systems with
a degenerate Lagrangian, i.e.\,one in which the velocity-space Hessian is singular. In this paper we introducing second order accurate DVI schemes, both with and without non-uniform time stepping. We show that it is not in general possible to construct a second order scheme with a preserved two-form by composing a first order scheme with its adjoint, and discuss the conditions under which such a composition is possible. We build two classes of 
second order accurate DVI schemes. We test these second order schemes numerically on two systems having noncanonical variables, namely the magnetic field line and
guiding center systems. 
Variational integration for Hamiltonian
systems with nonuniform time steps, in terms of an extended phase
space Hamiltonian, is generalized to noncanonical variables. It is shown
that preservation of proper degeneracy leads to single-step
methods without parasitic modes, i.e.~to non-uniform
time step DVIs. This extension applies to second
order accurate as well as first order schemes, and can be applied
to adapt the time stepping to an error estimate.

\end{abstract}






\date{\today}


%

\pacs{}

\maketitle 

\section{Introduction}

A variety of dissipation-free dynamical models in plasma physics share the property of arising from variational principles, including the guiding center equations, magnetic field line flow, and collision-free Vlasov dynamics. Variational integrators discretize the action associated with this variational principle rather than discretizing the equations of motion directly. The advantages of variational integrators are similar to those of the more specialized \emph{symplectic integrators}, most of which discretize Hamiltonian systems in canonical variables. Variational integrators, especially those we study in this paper, can conveniently deal with noncanonical variables. These methods preserve exactly the Hamiltonian (or variational) nature of the original ODE system, and main advantages of these integration methods accrue when very long timescale behavior is to be studied. 
For example, when using full-orbit simulations to assess the validity of the 
guiding center approximation for runaway electrons, Liu, Wang, and Qin\cite{JLiu_2016} concluded that simulations involving approximately $10^{11}$ timesteps were required.


The task of finding a reliable variational integrator for a given variational dynamical system is generally quite challenging. The easiest case occurs when the Lagrangian underlying the variational principle is \emph{non-degenerate},\cite{Marsden_2001} meaning that the velocity-space Hessian is invertible. Stable low-order and high-order variational integrators may be constructed in the non-degenerate setting using systematic procedures. The problem becomes much more challenging, however, when the velocity space Hessian is degenerate, i.e.\,has a nontrivial null space. This sort of degeneracy arises, for instance, when dealing with a so-called phase space Lagrangian\cite{Cary_1983}, either in canonical or noncanonical variables; such a Lagrangians is \emph{linear} in the velocities, and therefore has Hessian equal to zero.

Ellison\,\emph{et.\,al.}\cite{Ellison_2018} showed that when the variational integrator formalism discussed in Ref.\,\onlinecite{Marsden_2001} is applied to a phase space Lagrangian, the resulting scheme typically performs very poorly; unphysical, potentially unstable parasitic modes arise and spoil the benefits of using a variational discretization. This flaw in the basic theory of variational integration represents a serious shortcoming as far as its applicability is concerned, because degenerate Lagrangians are commonly encountered in practice. In plasma physics phase space Lagrangians are routinely used to model the fundamental problems of magnetic field line flow\cite{Cary_1983} and guiding center motion,\cite{Littlejohn_1983} and have been shown to describe many infinite-dimensional plasma models as well.\cite{Burby_thesis_2015,Burby_two_fluid_2017,Burby_fdk_2017}

In a more optimistic vein, Refs.\,\onlinecite{Ellison_2018,Ellison_thesis} also proposed a conceptually appealing strategy for avoiding the generic pitfalls of variational integration applied to phase space Lagrangians. The idea was to select carefully the discretization of the Lagrangian so that it preserves the degeneracy of the continuous Lagrangian. Such \emph{properly degenerate} discrete Lagrangians were shown to be free of the parasitic modes that plague generic discrete phase space Lagrangians. What therefore emerged from this work was a refined notion of variational integration appropriate to phase space Lagrangians (and perhaps more general degenerate Lagrangians as well) termed \emph{degenerate variational integration} (DVI). In the same work, degenerate variational integration was applied to magnetic field line flow and to guiding center motion, the latter under the restriction that one covariant component of the magnetic field is zero. Indeed, good long-term behavior of the orbits was observed. By exploiting a near-identity transformation of standard guiding center theory, Burby and Ellison\cite{Burby_Ellison_2017}
showed that DVI can still be applied to guiding center dynamics if this stringent constraint on the magnetic field is lifted.

While degenerate variational integration may be a promising candidate for coping with degenerate Lagrangians within a variational integration framework, DVI theory as it stands today is still in its infancy. In particular, the examples of degenerate variational integrators in Refs.\,\onlinecite{Ellison_2018,Ellison_thesis} suffer from three important drawbacks. First, (a) they start from continuous dynamics formulated in terms of either canonical variables or a restricted class of noncanonical variables.
The magnetic field line and guiding center examples of Refs.\,\onlinecite{Ellison_thesis} and \onlinecite{Ellison_2018} belong to this class.
Next, (b) they only achieve first-order accuracy in time. Finally, (c) they rely on uniform time stepping. Moreover, it is presently unclear if these drawbacks in examples are reflections of inherent limitations of the DVI concept, or only apparent limitations that might be overcome with additional insights.

The first purpose of this paper is to construct a large class of \emph{second-order accurate} degenerate variational integrators,  involving the so-called processing technique.\cite{Blanes_2004}
In particular we aim to address issue (b) by formulating two related second-order accurate DVI schemes and applying them to the field line and guiding center problems considered in Refs.\,\onlinecite{Ellison_2018} and \onlinecite{Burby_Ellison_2017}. Our second purpose is to address issue (c) by formulating non-uniform time stepping for DVI. The method chosen is related to the well-known method of extended phase space,\cite{Hairer_2006} generalized to the above class of noncanonical variables. Issue (a) will be the subject of future publications.


After providing an updated discussion of the basic elements of DVI theory in Section \ref{sec:DVI}, a class of second-order accurate DVIs will be presented in Section \ref{sec:DVI2}. We begin by discussing these DVIs for systems with canonical variables, showing transparently why it is not possible in general to obtain second order accuracy by composing a first order scheme with its adjoint. We also show how to formulate these second-order systems in a restricted class of noncanonical systems. In Section \ref{MFL-GC} we formulate these schemes for the magnetic field line integration problem and the guiding center problem, both in this restricted class. In Sec.\,\ref{sec:examples} we report on the numerical application of these second-order DVI schemes to the field line and guiding center examples, showing good long-time behavior and second order accuracy. 

In Sec.~\ref{sec:Non-unifTS} we first discuss nonuniform time stepping
in the context of canonical systems, showing the extended phase space
action and discretizations of it. We discuss the condition for a single-step
scheme, i.e.~a degenerate variational integrator (DVI), as described
in  Sec.~\ref{sec:DVI}. This condition, as for uniform time
steps, is that the discrete Hessian has the proper rank, the rank
of the continuous Hessian. We proceed to show how to apply the extended
phase space method to the class of noncanonical variables described in Sec.\,\ref{sec:DVI2}. We apply this methodology
to the field line and guiding
center examples. These systems are in the special form of noncanonical
variables described in Sec.~\ref{sec:DVI}, and a method is
described to obtain a single-step (DVI) scheme system with this special
form of noncanonical variables. As before, the rank of the discrete
Hessian predicts the single-step nature of the schemes. We also argue
that it is straightforward to apply this extended phase space method
to second and higher order schemes and to adaptive time stepping.

We summarize and discuss the results of this paper in Sec.\,\ref{sec:discussion}

\section{Degenerate Variational Integration: review and recent developments\label{sec:DVI}}
The purpose of this Section is to provide details of the basic properties of DVI. While a similar discussion appears in Ref.\,\onlinecite{Ellison_2018}, the ensuing discussion will reflect an improved understanding of DVI that has developed since the publication of Ref.\,\onlinecite{Ellison_2018}.

DVI is a refinement of variational integration that applies to phase space Lagrangians, and perhaps to more general degenerate Lagrangians. It is therefore helpful to recall briefly the basic ingredients of variational integration along the lines of Ref.\,\onlinecite{Marsden_2001}. To that end, consider a dynamical system governed by the variational principle based on the action $S$,
\begin{align}
\delta S =\delta \int_{t_1}^{t_2}L(q(t),\dot{q}(t))\,dt = 0,\label{Cont-var-prin-general}
\end{align}
where the dimension of $q$-space is $m$.
Assume for now that the Lagrangian $L$ is \emph{non-degenerate}, which means the velocity space Hessian
\begin{align}
M_{ij}(q,\dot{q}) = \frac{\partial^2L}{\partial \dot{q}^i\,\partial \dot{q}^j}
\end{align}
is invertible for each $(q,\dot{q})$. The most important consequence of non-degeneracy is that it implies the Euler-Lagrange equations
\begin{align}
\frac{d}{dt}\frac{\partial L}{\partial\dot{q}^i} = \frac{\partial L}{\partial q^i}\label{cts_ele}
\end{align}
are a system of $m$ second-order ordinary differential equations on $q$-space. Indeed, because we have
\begin{align}
\frac{d}{dt}\frac{\partial L}{\partial\dot{q}^i} = M_{ij}\ddot{q}^j + \frac{\partial L}{\partial \dot{q}^i\partial q^j} \dot{q}^j
\end{align}
(we use the standard summation convention for repeated indices), Eq.\,\eqref{cts_ele} may be rewritten as
\begin{align}
\ddot{q}^l = (M^{-1})^{lj}\bigg(\frac{\partial L}{\partial q^j} - \frac{\partial^2L}{\partial\dot{q}^j\partial q^i}\dot{q}^i\bigg),
\end{align}
which for a nondegenerate Hessian is a system of $m$ second-order ordinary differential equations for $q$. Per the usual prescription, such equations are also equivalent to system of $2m$ first-order ordinary differential equations advancing $(q,\dot{q})$. Therefore $(q,\dot{q})$-space is a suitable phase space for such a non-degenerate Lagrangian system. Another important property of non-degenerate Lagrangian systems is that it is possible, at least locally, to perform a Legendre transformation to obtain a Hamiltonian system in canonical variables, $q^i$, $p_i=\partial L/\partial q^i$.

According to Ref.\,\onlinecite{Marsden_2001}, a variational integrator for a system with Lagrangian $L$ is a time-marching algorithm that may be derived from the discrete-time approximation of Eq.\,(\ref{Cont-var-prin-general}), leading to a   discrete variational principle based on the action $S_d$,
\begin{align}
\delta S_d =\delta \sum_{k = N_1}^{N_2} h L_d(q_k,q_{k+1}) = 0,\label{discrete_vp_general}
\end{align}
where the \emph{discrete Lagrangian} $L_d$ is chosen as a specific approximation to the windowed time average of the Lagrangian according to
\begin{align}\label{av_ac}
L_d(q_k,q_{k+1}) \approx \frac{1}{h}\int_{t_k}^{t_{k+1}} L(q(t),\dot{q}(t))\,dt.
\end{align}
Here $h$ denotes the spacing of a uniform temporal grid $t_k=k h$. The quantities $q_k,q_{k+1}$ are evaluated at the endpoints of the interval $[t_k,t_{k+1}]$. The \emph{discrete Euler-Lagrange} (DEL) equations associated with the variational principle \eqref{discrete_vp_general} are given by 
\begin{align}
\frac{\partial L_d}{\partial q^i_2}(q_{k-1},q_{k}) + \frac{\partial L_d}{\partial q^i_1}(q_{k},q_{k+1}) = 0.\label{del_general}
\end{align}
Provided the \emph{discrete Hessian}
\begin{align}
\mathcal{M}_{ij}(q_0,q_1) = \frac{\partial L_d}{\partial q_0^i\partial q_1^j }(q_0,q_1)
\end{align} 
is invertible for each $(q_0,q_1)$, the discrete Euler-Lagrange equations define a mapping
\begin{align}
(q_k,q_{k+1})\mapsto (q_{k+1},q_{k+2}^*(q_k,q_{k+1})),\label{nd_one_step}
\end{align}
where $q_{k+2}^*(q_k,q_{k+1})$ is the solution of Eq.\,\eqref{del_general} for $q_{k+1}$ as a function of $(q_{k-1},q_{k})$ guaranteed by the implicit function theorem. 

When the continuous-time Lagrangian $L$ is non-degenerate, i.e.~the discrete Hessian is invertible, repeated application of the mapping \eqref{nd_one_step} defined by the discrete Euler-Lagrange equations \eqref{del_general} generates a sequence $k\mapsto q_k$ that approximates a solution of Eq.\,\eqref{cts_ele} sampled at the times $t_k = hk$. In particular note that such a discrete-time trajectory requires $2m$ initial conditions to supply to \eqref{nd_one_step}, which is the \emph{same} number of initial conditions required to specify a solution of the continuous-time Euler-Lagrange equations \eqref{cts_ele}.

Suppose now that the continuous-time Lagrangian $L$ has the general form
\begin{align}
L(z,\dot{z}) = \vartheta_i(z)\dot{z}^i - H(z).\label{General-Lagrangian}
\end{align}
Lagrangians of this form in arbitrary variables are known as \emph{phase space Lagrangians},\cite{Cary_1983} and are notable because every Hamiltonian system on an exact symplectic manifold is governed by such a Lagrangian. The  class of Lagrangians in Eq.~\eqref{General-Lagrangian} is a generalization of phase space Lagrangians of the form $p_i \dot{q}^i-H(q,p)$, in canonical variables.  Here we have conformed with standard conventions when discussing phase space Lagrangians in noncanonical variables by making the notational change $q\mapsto z$. 

Because $L$ is linear in the velocities, the velocity-space Hessian is zero, $M_{ij} = 0.$
The Euler-Lagrange equations (\eqref{cts_ele} with $q\rightarrow z$)  therefore cannot be equivalent to a system of $2m$ first-order ordinary differential equations on $(z,\dot{z})$-space, or equivalently a system of second-order equations on $m$-dimensional $z$-space. Instead they are equivalent to the \emph{first-order} system on the $m$-dimensional $z$-space given by
\begin{align}
\dot{z}^i\omega_{ij} = \frac{\partial H}{\partial z^j},\label{psl_ele}
\end{align}
where the $m\times m$ antisymmetric matrix $\omega_{ij}$ is defined by
\begin{align}
\omega_{ij}(z) = \frac{\partial \vartheta_i}{\partial z^j} - \frac{\partial \vartheta_j}{\partial z^i}.\label{omega-ij}
\end{align}
If the functions $\vartheta_i$ have the property that $\omega_{ij}$ is invertible for each $z$, this yields a system of $m$ \emph{first order} equations. It follows that the phase space for a phase space Lagrangian has dimension $m$, which is \emph{half} the dimension of the phase space for a non-degenerate Lagrangian. In particular, the number of initial conditions required to specify a solution of Eq.\,\eqref{psl_ele} is $m$ instead of $2m$.

Application of variational integration to a phase space Lagrangian is usually problematic for the following reason. Because invertible matrices are generic,
most choices of discrete Lagrangian $L_d$ will have an invertible discrete Hessian $\mathcal{M}$.
One consequence of the invertibility of the discrete Hessian, which is suggestive that something is wrong, is that the discrete Euler-Lagrange equations for $L_d$ involve three time levels, as in Eq.\,\eqref{del_general},  and therefore require $2m$ initial conditions to generate the discrete-time trajectory $k\mapsto z_k$. Indeed, from the perspective of the discrete Euler-Lagrange equations \eqref{del_general}, a discretized phase space Lagrangian is no different from a discrete Lagrangian coming from a non-degenerate continuous-time Lagrangian. Therefore the mapping \eqref{nd_one_step} is still well-defined, which implies that the discrete system derived from a generic discrete Lagrangian requires $2m$ initial conditions, in spite of the fact that the underlying continuous-time system \eqref{psl_ele} requires only $m$ initial conditions.

The preceding argument shows that typical variational integrators for phase space Lagrangians are \emph{multi-step methods}. Multi-step methods generally have \emph{parasitic modes}, which may be unstable. Nevertheless, many multi-step methods have favorable numerical performance, in spite of the existence of these parasitic modes.\cite{Hairer_2006} In a reasonable scheme for a dissipative system, such parasitic modes damp out harmlessly in the early stages of the integration. However, the parasitic modes of multistep variational integrators typically do not damp out. To understand why, we note that multi-step integrators arising from discrete phase space Lagrangians must have either neutrally stable parasitic modes or have one growing parasitic mode for each damped parasitic mode.\cite{Ellison_2018} This symmetry between damped and growing parasitic modes is a consequence of the preservation of a symplectic form on the $2m$-dimensional space of pairs $(z_1,z_2)$. Thus, the parasitic modes associated with a discrete phase space Lagrangian may be neutrally stable at best, but still susceptible to driving by nonlinear terms. Unfavorable behavior of parasitic modes arising from examples of discrete phase space Lagrangians are described in Refs.\,\onlinecite{Ellison_2018} and \onlinecite{Ellison_thesis}.

In Ref.\,\onlinecite{Ellison_2018}, Ellison \,\emph{et.\,al.} observed that the following special class of discrete phase Lagrangians avoid these multistep issues in discretizing
phase space Lagrangians.

\begin{definition}[properly-degenerate discrete Lagrangian]
Suppose the dimension $m$ of phase space (coordinates $z$) is even. A discrete Lagrangian $L_d(z_0,z_1)$ is \emph{properly-degenerate} if the rank of the discrete Hessian is everywhere half-maximum. In other words the rank of $\mathcal{M}_{ij}(z_0,z_1)$ is $m$ rather than $2m$ for all $(z_0,z_1)$.
\end{definition}

\begin{remark}
We write the dimension as $2m$; this dimension is necessarily even when working with phase space Lagrangians because an antisymmetric matrix like $\omega_{ij}$ always has a non-trivial null space in odd dimensions. So Eq.\,\eqref{psl_ele} cannot be solved
for $\dot{z}^i$.
\end{remark}


A simple example of a non-properly degenerate discrete Lagrangian in a  one degree of freedom ($m=1$) system in canonical variables follows from a centered discretization of $S=\int(p\dot{q}-H)dt$,
\begin{equation}
L_{d}\left(q_{0},p_{0},q_{1},p_{1}\right)=\frac{p_{0}+p_{1}}{2}\frac{q_{1}-q_{0}}{h}-H\left(\frac{q_{0}+q_{1}}{2},\frac{p_{0}+p_{1}}{2}\right).\label{eq:Ld-explicit-midpoint}
\end{equation}
The discrete Euler-Lagrange (DEL) equations are
\[
\frac{\partial L_{d}\left(q_{0},p_{0},q_{1},p_{1}\right)}{\partial q_{1}}+\frac{\partial L_{d}\left(q_{1},p_{1},q_{2},p_{2}\right)}{\partial q_{1}}=0
\]
and
\[
\frac{\partial L_{d}\left(q_{0},p_{0},q_{1},p_{1}\right)}{\partial p_{1}}+\frac{\partial L_{d}\left(q_{1},p_{1},q_{2},p_{2}\right)}{\partial p_{1}}=0,
\]
leading to
\[
\frac{p_{2}-p_{0}}{2h}=-\frac{1}{2}H_{1}\left(0,1\right)-\frac{1}{2}H_{1}\left(1,2\right),
\]\[
\frac{q_{2}-q_{0}}{2h}=\frac{1}{2}H_{2}\left(0,1\right)+\frac{1}{2}H_{2}\left(1,2\right),
\]
where $H_{1}$ and $H_{2}$ are derivatives of $H$ with respect to
its first and second arguments, respectively, and the symbols $(0,1),(1,2)$
represent $(q_{0}+q_{1})/2,(p_{0}+p_{1})/2$ and  $(q_{1}+q_{2})/2,(p_{1}+p_{2})/2$, respectively.
This is clearly a multistep method, linking steps $k=0$, $k=1$,
$k=2$, i.e.~giving second order difference equations for $q_{k}$
and $p_{k}$. (The finite difference forms for $\dot{q}$ and $\dot{p}$ suffer from ``stencil spreading.") The discrete Hessian for this system is
\[
\left(\begin{array}{cc}
\partial^{2}L_{d}/\partial q_{0}\partial q_{1} & \partial^{2}L_{d}/\partial q_{0}\partial p_{1}\\
\partial^{2}L_{d}/\partial p_{0}\partial q_{1} & \partial^{2}L_{d}/\partial p_{0}\partial p_{1}
\end{array}\right)
\]
\[
=\left(\begin{array}{cc}
-\frac{H_{11}\left((q_{0}+q_{1})/2,(p_{0}+p_{1})/2\right)}{4} & \,\,\,-\frac{1}{2h}-\frac{H_{12}\left(q_{0}+q_{1})/2,(p_{0}+p_{1})/2\right)}{4}\\
\frac{1}{2h}-\frac{H_{21}\left(q_{0}+q_{1})/2,(p_{0}+p_{1})/2\right)}{4} & -\frac{H_{22}\left(q_{0}+q_{1})/2,(p_{0}+p_{1})/2\right)}{4}
\end{array}\right),
\]
generically of full rank, showing agreement between the multistep
property of the DEL equations and the rank of the discrete Hessian for this system. And, indeed, this system requires an extra set of initial conditions and exhibits parasitic modes.

This example is to be contrasted with the first order accurate case arising from the discrete phase space Lagrangian, again in one degree of freedom and in canonical variables,
\begin{align}
L_{d}=p_{k}\frac{q_{k+1}-q_{k}}{h}-H(q_{k+1},p_{k}),\label{eq_SE-variational_P}
\end{align}
leading to
\begin{align}
q_{k+1}=q_{k}+hH_{2}(q_{k+1},p_{k}),\,\,\,p_{k+1}=p_{k}-hH_{1}(q_{k+1},p_{k}),\label{SEScheme}
\end{align}
a form of the symplectic Euler scheme\cite{Hairer_2006}, with $q$ updated
implicitly and used in a leapfrog manner in the explicit update of $p$. 
The discrete Hessian is
\[
\left(\begin{array}{cc}
0 & \,\,\,0\\
\frac{1}{h}-H_{21}\left(q_{1},p_{0}\right) & \,\,\,0
\end{array}\right),
\]
of rank $m=1$, traced to the fact that $L_d$ in Eq.~\eqref{eq_SE-variational_P} depends on $p$ at only one step. This result is consistent with the single-step nature of the scheme. The adjoint scheme $S_h^{\dagger}$ also has a discrete Hessian with rank $m=1$.

Reference \onlinecite{Ellison_thesis} proves that properly-degenerate discrete phase space Lagrangians are necessarily free of parasitic modes. This result suggests, but does not directly imply, that variational integrators derived from properly-degenerate discrete phase space Lagrangians are single-step methods instead of multi-step methods. In fact, under mild technical hypotheses, properly-degenerate discrete phase space Lagrangians are indeed single-step methods. 

The simplest way to understand the single-step nature of variational integrators obtained from properly-degenerate discrete phase space Lagrangians is to restrict our attention to the linearized discrete Euler-Lagrange equations. 

\begin{theorem}[Linearized single-step property]\label{l_onestep}
Let $L_d(z_1,z_2)$ be a properly-degenerate discrete Lagrangian satisfying the mild technical hypotheses (G1)-(G2) described in Appendix A.
Then the discrete Euler-Lagrange equations linearized about a trajectory $k\mapsto z^0_k$ are equivalent to a single-step method.
\end{theorem}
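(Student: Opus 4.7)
The plan is to introduce linearized discrete momenta and show that the three-term recurrence from the linearized DEL descends to a single-step map on an effective $2m$-dimensional phase space, made possible by proper degeneracy.

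First, I would write the linearization of the DEL about $k\mapsto z_k^0$ as the three-term recurrence
\begin{equation*}
\mathcal{M}_k^T\,\delta z_{k-1} + B_k\,\delta z_k + \mathcal{M}_{k+1}\,\delta z_{k+1} = 0,
\end{equation*}
where $\mathcal{M}_\ell = \mathcal{M}(z_{\ell-1}^0, z_\ell^0)$ and $B_k = D_{22}L_d(z_{k-1}^0,z_k^0) + D_{11}L_d(z_k^0,z_{k+1}^0)$. Proper degeneracy forces each $\mathcal{M}_\ell$ to have rank $m$, hence to have $m$-dimensional kernel and cokernel. Hypothesis (G1), which I anticipate keeps these ranks constant along the trajectory, makes the kernels and cokernels into smoothly varying subbundles.

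Next, I would introduce the linearized discrete momentum $\delta p_k = \mathcal{M}_k^T\,\delta z_{k-1} + D_{22}L_d(z_{k-1}^0,z_k^0)\,\delta z_k$, in terms of which the linearized DEL becomes $\delta p_k + D_{11}L_d(z_k^0,z_{k+1}^0)\,\delta z_k + \mathcal{M}_{k+1}\,\delta z_{k+1} = 0$. By construction $\delta p_k$ is pinned to the affine subspace $D_{22}L_d(z_{k-1}^0,z_k^0)\,\delta z_k + \operatorname{im}\mathcal{M}_k^T$, an $m$-dimensional ``primary constraint'' on $(\delta z_k,\delta p_k)$ that cuts the naive $4m$-dimensional state space down to $3m$. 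Solvability for $\delta z_{k+1}$ imposes the further ``secondary constraint'' that $\delta p_k + D_{11}L_d(z_k^0,z_{k+1}^0)\,\delta z_k\in\operatorname{im}\mathcal{M}_{k+1}$, another $m$-dimensional restriction that carves out the expected $2m$-dimensional effective phase space. Hypothesis (G2), which I expect encodes a transversality or Legendre-type nondegeneracy between $\operatorname{im}\mathcal{M}_k^T$ and $\operatorname{im}\mathcal{M}_{k+1}$, ensures this intersection has the correct dimension all along the trajectory.

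On the $2m$-dimensional constraint surface $\delta z_{k+1}$ is determined only modulo $\ker\mathcal{M}_{k+1}$, so the final step is to show that the updated momentum
\begin{equation*}
\delta p_{k+1} = \mathcal{M}_{k+1}^T\,\delta z_k + D_{22}L_d(z_k^0,z_{k+1}^0)\,\delta z_{k+1}
\end{equation*}
does not feel this null-space ambiguity, equivalently that $D_{22}L_d(z_k^0,z_{k+1}^0)$ annihilates $\ker\mathcal{M}_{k+1}$. This is the principal obstacle I foresee: the dimension counting above is purely linear-algebraic, but showing that the discrete momentum pushes forward consistently requires (G2) in an essential way, perhaps together with a discrete analogue of symplecticity that equates the forward and backward discrete Legendre transforms modulo their shared null spaces. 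Once this is in place, the induced map $(\delta z_k,\delta p_k)\mapsto(\delta z_{k+1},\delta p_{k+1})$ is a well-defined linear isomorphism of the effective $2m$-dimensional phase space, which is the claimed single-step scheme.
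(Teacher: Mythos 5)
Your setup---the three-term linearized recurrence with blocks $\mathcal{M}_k^T$, the symmetric middle matrix, and $\mathcal{M}_{k+1}$---matches the paper's, but the argument breaks at exactly the point you flag as the ``principal obstacle,'' and the repair you propose goes in the wrong direction. The $k$-th linearized DEL equation determines $\delta z_{k+1}$ only modulo $\ker\mathcal{M}_{k+1}$, and a single-step method must produce $\delta z_{k+1}$ itself, not merely a well-defined momentum $\delta p_{k+1}$; showing that the momentum update is insensitive to the null-space ambiguity would therefore not close the gap even if it were true. Moreover the claim you need---that $D_{22}L_d(z_k^0,z_{k+1}^0)$ annihilates $\ker\mathcal{M}_{k+1}$---does not follow from proper degeneracy together with (G1)--(G2), and is false in general: add to the symplectic-Euler discrete Lagrangian $L_d=p_0(q_1-q_0)/h-H(q_1,p_0)$ a term $g(p_1)$ depending only on the second slot. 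Then $\mathcal{M}$ is unchanged, so proper degeneracy and (G1)--(G2) still hold and $\ker\mathcal{M}$ still contains the $p_1$-direction, yet $D_{22}L_d$ acquires a $g''(p_1)$ entry and no longer kills that direction, even though the resulting DEL is still manifestly one-step. Your reading of the hypotheses is also off: (G1) is not constancy of rank (that is part of proper degeneracy itself) but the transversality $\operatorname{im}\mathcal{M}_{k+1}^T\cap\operatorname{im}\mathcal{M}_{k+1}=\{0\}$, i.e.\ the ``Legendre-type nondegeneracy'' you attribute to (G2); the actual (G2) is a graph condition on the image of $\ker\mathcal{M}_{k+1}$ under the shifted middle block, which your argument never uses.

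The missing idea is that the remaining components of $\delta z_{k+1}$ are supplied by the \emph{next} DEL equation, not by the momentum. Using (G1) to split the equation space as $\operatorname{im}[A(k)]\oplus\operatorname{im}[B(k)]$, one projects the $k$-th linearized DEL onto $\operatorname{im}[B(k)]$ (recovering essentially your relation $\pi_Y C(k)\delta z_k+\mathcal{M}_{k+1}\delta z_{k+1}=0$) and projects the $(k{+}1)$-th linearized DEL onto $\operatorname{im}[A(k+1)]$, which yields $\mathcal{M}_{k+1}^T\delta z_k+\pi_X C(k+1)\delta z_{k+1}=0$; together these give enough equations to determine $\delta z_{k+1}$ from $\delta z_k$, and (G2) is precisely what guarantees the combined linear map has trivial kernel, because no nonzero vector of $\ker\mathcal{M}_{k+1}$ can be sent by $C(k+1)$ into the complementary image killed by $\pi_X$. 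Your ``secondary constraint'' at step $k+1$ is in fact this shifted, projected equation, so your framework could be salvaged by imposing that constraint on $\delta z_{k+1}$ directly rather than by trying to make the momentum forgetful of the ambiguity; as written, however, the proof is incomplete at its crux and the lemma it rests on is false.
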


\begin{remark}
For a complete statement and proof of Theorem \ref{l_onestep}, see Theorem \ref{l_onestep_appendix} in Appendix A.
\end{remark}

\begin{proof}[Proof sketch]
Let $k\mapsto z_{k}^\epsilon$ (the limit $\epsilon\rightarrow 0$ represents a reference trajectory), be a smooth $\epsilon$-dependent family of solutions of the discrete Euler-Lagrange equations associated with a properly-degenerate discrete Lagrangian,
\begin{align}
\frac{\partial L_d}{\partial z_2^i}(z_{k-1}^\epsilon,z_k^\epsilon) + \frac{\partial L_d}{\partial z_1^i}(z_{k}^\epsilon,z_{k+1}^\epsilon) = 0,
\end{align} 
where $\partial/\partial z_{1}$ and $\partial/\partial  z{_2}$ refer to derivatives with respect to the first and second arguments of $L_d$. Differentiating the discrete Euler-Lagrange equations with respect to $\epsilon$ at $\epsilon=0$ shows that the linearization $k\mapsto \delta z_k = \partial_\epsilon (z_k^\epsilon)_{\epsilon=0}$ of a trajectory near $k\mapsto z_k^0$ satisfies the linearized discrete Euler-Lagrange equations
\begin{align}
A_{ij}(k)\delta z_{k-1}^j + C_{ij}(k)\delta z_{k}^j + B_{ij}(k)\delta z_{k+1}^j = 0,
\end{align}
where we have introduced the convenient shorthand notation
\begin{align}
A_{ij}(k) =& \mathcal{M}_{ji}(z_{k-1}^0,z_k^0)\label{A_shorthand}\\
B_{ij}(k) = & \mathcal{M}_{ij}(z_k^0,z_{k+1}^0)\label{B_shorthand}\\
C_{ij}(k) = & \frac{\partial L_d}{\partial^2 z_2^i\partial z_2^j}(z_{k-1}^0,z_{k}^0) + \frac{\partial^2 L_d}{\partial z_1^i \partial z_1^j}(z_k^0,z_{k+1}^0).\label{C_shorthand}
\end{align}
Note that $C_{ij}(k) = C_{ji}(k)$ is a symmetric matrix, while $A_{ij}(k+1) = B_{ji}(k)$ are transposes of one another after a time step shift. 

Let $[Z]$ denote the $m\times m$-matrix whose components are $Z_{ij}$. Set $X_k = \text{im}\, [A(k)]$ and $Y_k=\text{im}\, [B(k)]$, where $\text{im}$ denotes the range/column space of the matrix.
By proper degeneracy and hypothesis (G1), $\text{dim} X_k = \text{dim} Y_k = m/2$ and $X_k\cap Y_k = \{0\}$. Therefore $\mathbb{R}^m= X_k\oplus Y_k$ decomposes as a direct sum for each $k$. Associated with this direct sum is the pair of projection matrices $[\pi_X(k)]:\mathbb{R}^m\rightarrow X_k$ and $[\pi_Y(k)]:\mathbb{R}^m\rightarrow Y_k$. Applying the projection $[\pi_X(k)]$ to the linearized discrete Euler-Lagrange equations gives
\begin{align}
A_{ij}(k)\delta z_{k-1}^j + (\pi_X(k))_{i\overline{i}}C_{\overline{i}j}(k)\delta z_{k}^j =0,\label{X_proj}
\end{align}
while applying the projection $[\pi_Y(k)]$ gives
\begin{align}
(\pi_Y(k))_{i\overline{i}}C_{\overline{i}j}(k)\delta z_{k}^j + B_{ij}(k)\delta z_{k+1}^j = 0.\label{Y_proj}
\end{align}
In particular by shifting \eqref{X_proj} ahead by one timestep we obtain the implicit linear relation between $\delta z_k$ and $\delta z_{k+1}$ given by
\begin{align}
0=&A_{ij}(k+1)\delta z_{k}^j + (\pi_X(k+1))_{i\overline{i}}C_{\overline{i}j}(k+1)\delta z_{k+1}^j\label{lprf_one}\\
0=& (\pi_Y(k))_{i\overline{i}}C_{\overline{i}j}(k)\delta z_{k}^j + B_{ij}(k)\delta z_{k+1}^j.\label{lprf_two}
\end{align}
To complete the proof it is enough to demonstrate that for each $\delta z_k$, there exists a unique $\delta z_{k+1}$ that satisfies Eqs.\,\eqref{lprf_one}-\eqref{lprf_two}. This is done in the proof of Theorem \ref{l_onestep_appendix} in Appendix A.
\end{proof}

\begin{remark}
Theorem \ref{nl_onestep_appendix} in Appendix A uses the above result and the implicit function theorem to prove that DVIs are also one-step methods at the nonlinear level.
\end{remark}

We therefore have the following simple explanation for the absence of parasitic modes in variational integrators derived from properly-degenerate discrete phase space Lagrangians. Because parasitic modes only arise in multi-step schemes, and DVIs are equivalent to single-step schemes by Theorem \ref{nl_onestep_appendix}, parasitic modes are not generated by DVIs. Note that Ref.\,\onlinecite{Ellison_thesis} proves the absence of parasitic modes using less direct arguments, but \emph{does not} prove that DVIs are generally $1$-step methods. (The $1$-step property was observed in examples, however.) Theorems \ref{l_onestep} and \ref{nl_onestep_appendix} therefore give a more detailed understanding of the benefits of DVI.

\section{Second-order DVI\label{sec:DVI2}}
We now turn to the task of constructing degenerate variational integrators with second-order accuracy. As in Refs.\,\onlinecite{Ellison_thesis} and  \onlinecite{Ellison_2018}, we focus on noncanonical phase space Lagrangians of the form
\begin{align}\label{usual_suspect}
L(x,y,\dot{x},\dot{y}) = f_i(x,y)\,\dot{x}^i - H(x,y),
\end{align}
where the dimension $m$ is even and $i=1,\dots,m/2$.  Such a phase space Lagrangian is a special case of the general phase space Lagrangian in noncanonical variables in Eq.\,(\ref{General-Lagrangian}), without terms proportional to $\dot{y}_i$. This form is sufficient to cover the important examples of magnetic field line flow and guiding center dynamics tested in this paper. In toroidal geometries of interest to magnetic fusion, the general guiding center Lagrangian may be placed in the form \eqref{usual_suspect} through the use of toroidal regularization,\cite{Burby_Ellison_2017}. Also, as we will discuss, the field-line Lagrangian may always be brought into the form \eqref{usual_suspect} using a simple gauge transformation.


\subsection{Composing a first order scheme with its adjoint?}\label{Sec.Composing}

A common method for constructing a second-order accurate integrator starting from a first-order accurate integrator is to compose it with its adjoint\cite{Hairer_2006}, discussed above for the special case of the symplectic Euler scheme. To show what can go wrong with composing a first order scheme with
its adjoint in the context of variational integration, let us introduce the discrete phase space Lagrangian
for a canonical system in one degree of freedom,
\begin{equation}
L_{d}=p_{k}\frac{q_{k+1}-q_{k}}{h}-H(q_{k},p_{k}).\label{eq:Mod-SE-Ld}
\end{equation}
Except for the change $H(q_{k+1},p_{k})\to H(q_{k},p_{k})$, this discretization is identical to that
of the symplectic Euler scheme of Eqs.\,\eqref{eq_SE-variational_P} and \eqref{SEScheme}. Variations with respect
to $p_{k}$ and $q_{k}$ yield the map
\begin{equation}
q_{k+1}=q_{k}+hH_{2}(q_{k},p_{k}),\,\,\,p_{k+1}=p_{k}-hH_{1}(q_{k+1},p_{k+1}).\label{eq:Mod-SE-scheme}
\end{equation}
The first of these is explicit with respect to $q$; the second update is implicit in $p$ \emph{and} leapfrogged
in $q$, so that this scheme is slightly different from the updating in symplectic Euler.
The adjoint of this scheme follows from $k\leftrightarrow k+1,\,\,h\rightarrow-h$. We find
\begin{equation}
L_{d}=p_{k+1}\frac{(q_{k+1}-q_{k})}{h}-H(q_{k+1},p_{k+1}),\label{eq:Adj-Mod-SE-Ld}
\end{equation}
leading to
\begin{equation}
p_{k+1}=p_{k}-hH_{1}(q_{k},p_{k}),\,\,\,q_{k+1}=q_{k}+hH_{2}(q_{k+1},p_{k+1}).\label{eq:Adj-Mod-SE-scheme}
\end{equation}
This scheme is explicit in $p$, implicit and leapfrogged in $q$. 

By direct substitution,
we find that
\begin{equation}
\left(1+hH_{12}(q,p)\right)dq\wedge dp\label{eq:Two-form-with-O(h)}
\end{equation}
is preserved by this scheme. As discussed in Ref.\,\onlinecite{Ellison_2018}, the preservation of this two-form can also be shown by
inspecting the first and last terms in $dS=d\sum_{k}hL_{d}$.
This property is consistent with the fact that this scheme and the adjoint symplectic Euler scheme are equivalent under a noncanonical change of variables. Unlike the symplectic Euler scheme, which preserves the canonical
two-form $\omega=dq\wedge dp$, this form has $\omega=\omega(h)=(1+O(h))dq\wedge dp$. 
That is, in this scheme the phase space coordinates $(q,p)$ are not
canonical. By analogous arguments, or by direct inspection, the adjoint
scheme preserves
\begin{equation}
\omega^{\dagger}(h)=\omega(-h)=\left(1-hH_{12}(q,p)\right)dq\wedge dp\label{eq:Two-form-adjoint},
\end{equation}
and because the correction is $O(h)$, the two-forms differ by $O(h)$.
From these observations, when $\omega(h)\neq \omega(-h)$ it is not clear how to determine which two-form, if any,
is preserved by the composition of the scheme and its adjoint. In Sec.~\ref{Numerical-no-go-composing} we show numerical evidence that such a composed form does not in general preserve any two-form. Also, in Appendix C we show the direct analogy to the symplectic Euler scheme in Eq.~\eqref{SEScheme} (not  Eq.~\eqref{eq:Mod-SE-scheme}), and show that its preserved two-form also has $O(h)$ corrections. While we cannot rule out the existence of a first order accurate scheme in our class of noncanonical variables (which would allow composition with its adjoint to obtain second order accuracy), we proceed in the next section to show two separate second order accurate schemes for such systems.

\subsection{Centered schemes for second order accuracy}

Because composing a first-order DVI with its adjoint does not reliably produce a scheme preserving a symplectic form (although such a scheme is second-order accurate), we are naturally led to consider the problem of proceeding to higher order by identifying improved properly degenerate discrete Lagrangians.
We first illustrate for a one degree of freedom case ($m=2$) in canonical variables,  introducing
two different staggered, centered schemes to discretize the action
$S=\int Ldt$ for the phase space Lagrangian $L(q,p,\dot{q},\dot{p})=p\dot{q}-H(q
,p)$.
The first scheme has
\[
L_{d}(q_{k},q_{k+1},p_{k+1/2})=p_{k+1/2}\frac{(q_{k+1}-q_{k})}{h}-H\left(\frac{q
_{k}+q_{k+1}}{2},p_{k+1/2}\right).
\]
This scheme includes a staggered nature of $p$ and a midpoint nature with respect
to $q$. 
Taking variations with respect to $q_{k}$ and $p_{k+1/2}$ we find

\begin{align}
q_{k+1} & =q_k+h H_2\left(\frac{q_k+q_{k+1}}{2},p_{k+1/2}\right),\\
p_{k+1/2}& =p_{k-1/2}-\frac{h}{2} H_1\left(\frac{q_k+q_{k+1}}{2},p_{k+1/2}\right)
-\frac{h}{2} H_1\left(\frac{q_{k-1}+q_{k}}{2},p_{k-1/2}\right)
\end{align}
We call this the \emph{midpoint DVI} (MDVI) scheme. This scheme is clearly time-centered, which suggests second-order accuracy, and must be advanced implicitly in both variables. It can easily be shown to be properly degenerate, by showing that the rank of the $2\times2$ discrete Hessian is one, essentially because, as for the symplectic Euler scheme, its adjoint, and the schemes of Eqs.~\eqref{eq:Mod-SE-Ld} and \eqref{eq:Adj-Mod-SE-Ld}, the discrete Lagrangian $L_d$ depends on $p$ at only one time level.
This degeneracy is in spite of the fact that the scheme appears to be a two-step scheme, connecting $q_{k-1}$, $q_{k}$ and $q_{k+1}$ (but only $p_{k-1/2}$ and $p_{k+1/2}$). As in Ref.\,\onlinecite{Ellison_2018}, $q_{k-1}$ can be expressed in terms of $q_{k}$ and $p_{k-1/2}$; i.e., it defines a time-advance map of the form $(q_k, p_{k-1/2}) \mapsto (q_{k+1}, p_{k+1/2})$, namely, a one-step method. This property of appearing to be two-step but showing a single-step nature after some substitutions,\cite{Ellison_2018} shows the importance of the discrete Hessian test; indeed, without the assurance of the discrete Hessian test, it would be easy to miss the possibility of this substitution. 

The second scheme uses
\[
L_{d}(q_{k},q_{k+1},p_{k+1/2})=p_{k+1/2}\frac{(q_{k+1}-q_{k})}{h}-
\frac{1}{2}H\left(q_{k},p_{k+1/2}\right)-\frac{1}{2}H\left(q_{k+1},p_{k+1/2}
\right),
\]
and variations lead to
\begin{align}
q_{k+1}=q_k+\frac{h}{2}H_2(q_k,p_{k+1/2})+\frac{h}{2}H_2(q_{k+1},p_{k+1/2}),\\
p_{k + 1/2}=p_{k - 1/2}-\frac{h}{2} H_1\left(q_{k},p_{k + 1/2}\right)
-\frac{h}{2} H_1\left(q_{k},p_{k-1/2}\right).
\end{align}
This second scheme is also centered, again suggesting second order accuracy, and also must be advance implicitly in both variables. For this scheme, the discrete Lagrangian is obtained using a trapezoidal quadrature scheme, hence we call this scheme the \emph{trapezoidal DVI} (TDVI) scheme. In this scheme, the single-step nature is evident from the discrete Euler-Lagrange equations; of course, the Hessian test confirms the single-step character. Finally, a backward error analysis for either the MDVI or the TDVI scheme, with $q_{k}=q(t=kh)$ and $p_{k+1/2}=p((k+1/2)h)$, shows second-order accuracy. 

In both schemes, if initial conditions $q_{0}$ and $p_{0}$ are both given 
at $t=0$, the momentum variable needs to be regressed to $p_{-1/2}$
by a processing scheme, which we will discuss shortly.
Numerical trials using the non-reversible (c.f.~Appendix B) Hamiltonian $H=(p^{2}+q^{2})/2+\alpha q p^{3}/3$ indicate second order accuracy and the good long-time behavior of
a scheme with a preserved two-form.



For the noncanonical (but not completely general) phase space Lagrangians of the form of Eq.\,\eqref{usual_suspect}, the midpoint DVI scheme has
\begin{equation}
    L_d(x_k,y_{k + 1/2}, x_{k+1}) = f_i\left( \frac{x_k+x_{k+1}}{2}, y_{k+1/2}\right ) \frac{x^i_{k+1}-x^i_k}{h}  - H\left ( \frac{x_k+x_{k+1}}{2},y_{k+1/2}\right).\label{centered_LD}
\end{equation}
Again, the centeredness suggests, and backward error analysis indeed shows, second order accuracy. Also, because the staggered-grid discrete Lagrangian \eqref{centered_LD} is merely a relabeling the of a first-order non-staggered-grid discrete Lagrangian introduced in Ref.\,\onlinecite{Ellison_thesis}, the DVI associated with \eqref{centered_LD} automatically preserves a symplectic form. Again, the discrete Hessian has half rank because $L_d$ depends on $y$ at only one time level.

The discrete Euler-Lagrange equations stemming from Eq.~\eqref{centered_LD} are given by
\begin{subequations}
\begin{align}
    \half \left( f_{i,j}\left(k + \half\right)(x_{k+1}^i - x_{k}^i)  + f_{i,j}\left(k-\half\right) (x_k^i - x_{k-1}^i)  \right) - \quad & \nonumber \\
    \left(f_j\left(k +  \half\right) - f_j\left(k - \half\right)\right) - \frac{h}{2} \left(H_{,j}\left(k + \half\right) + H_{,j}\left(k-\half \right) \right) & = 0 \\
    f_{i,\alpha}\left(k + \half \right) \left(x_{k+1}^i - x_k^i\right) - h H_{,\alpha}\left(k + \half\right) & = 0,
\end{align}
\label{mdvi-noncanonical}
\end{subequations}
where $i,j \in \{1, ..., m/2\}$, $m$ is even, while $\alpha \in \{m/2+1, ... , m\}$ and $(k+1/2)$ is shorthand for the arguments $\left(\frac{x_{k+1} + x_k}{2}, y_{k+1/2} \right)$ and $(k-1/2)$ is analogous. Like the canonical MDVI, the apparent dependence on $x_{k-1}$ may be eliminated in favor of a function depending on $(x_k, y_{k-1/2})$. In practice one has access to $x_{k-1}$ due to the state at prior iterations (except for on the first step) and the stored value may be used instead of directly solving for $x_{k-1}$ at each new iteration.

The trapezoidal DVI scheme for this class of noncanonical cases arises from
\begin{equation}
\begin{split}\label{centered_LD-prime}
L_d(x_k,y_{k+1/2},x_{k+1}) =& \tfrac{1}{2}\left(f_i(x_k, y_{k+1/2})+f_i(x_{k+1}, y_{k+1/2})\right) \frac{x^i_{k+1}-x^i_k}{h} \\
&- \tfrac{1}{2}\left(H(x_{k},y_{k+1/2}) + H(x_{k+1},y_{k+1/2}) \right).
\end{split}
\end{equation}
Again, the time-centered property leads to second order accuracy, and its discrete Hessian shows that it is properly degenerate, again because $L_d$ depends on $y$ at only one time level. 

Performing variations with respect to $x_{k}, y_{k+1/2}$ yields the TDVI scheme:
\begin{subequations}
\begin{align}
    \half \left(f_{i,j}(x_k, y_{k+1/2}) \left( x_{k+1}^i - x_{k}^i \right) +  f_{i,j}(x_k, y_{k-1/2}) \left( x_{k}^i - x_{k-1}^i \right) \right) -  &\nonumber \\ 
    \half \left(f_j(x_{k+1}, y_{k+1/2}) + f_{j}(x_{k}, y_{k+1/2}) - f_j(x_k, y_{k-1/2}) - f_j(x_{k-1}, y_{k-1/2} \right) & - \nonumber \\
    \half \left( H_{,j}(x_k, y_{k+1/2}) + H_{,j}(x_{k+1}, y_{k+1/2}) \right) & = 0  \\
    \half \left( f_{i,\alpha} (x_k, y_{k+1/2}) + f_{i,\alpha}(x_{k+1}, y_{k+1/2}) \right) \left(x_{k+1}^i - x_{k}^i\right) - & \nonumber \\
    \tfrac{h}{2} \left( H_{,\alpha}(x_k, y_{k+1/2}) + H_{,\alpha}(x_{k+1}, y_{k+1/2}) \right) &= 0. 
\end{align}\label{tdvi-noncanonical}
\end{subequations}
In contrast to the canonical setting, the TDVI scheme introduces dependence on $x_{k-1}$ in general. Of course, this dependence is superficial, and can be eliminated in favor of a function of $(x_k, y_{k-1/2})$ as previously discussed. Relative to the MDVI scheme, the TDVI scheme requires more function evaluations (i.e., additional evaluations of $f$ and $H$) in the update rule; this has the potential to increase the computational expense of TDVI relative to MDVI.

Of course, initial conditions for the time-marching schemed defined by \eqref{centered_LD} or \eqref{centered_LD-prime} will be supplied at an integer timestep instead of directly on the staggered grid. In order to transform integer timestep initial conditions $(x_0,y_0)$ to staggered-grid initial conditions $(x_0,y_{-1/2})$, it is sufficient to advance $y$ backward in time by a half-step $h/2$ using any first (or higher) order accurate scheme. Encode this transformation in the mapping $\varphi_{-h/2}:(x_0,y_0)\mapsto (x_0,y_{-1/2})$. Similarly, at the end of a simulation, the staggered grid data $(x_N,y_{N-1/2})$ must be collocated to the integer grid data $(x_N,y_N)$. The natural way to do this is simply to apply the inverse of $\varphi_{-h/2}$, i.e. set $(x_N,y_N) = \varphi_{-h/2}^{-1}(x_N,y_{N-1/2})$, and second order accuracy is preserved. These two conditions are special cases of enforcing the relationship 
\begin{align}
\varphi_{-h/2}(x_k,y_{k}) = (x_k,y_{k-1/2})\label{CollocationScheme}
\end{align}
for all $k$. Indeed, for displaying results during a computation, e.g.~where $k$ equals a multiple of a fundamental period $M$, typically with $M>1$, these results will retain second order accuracy if the points are collocated according to Eq.~\eqref{CollocationScheme}. In Ref.\,\onlinecite{Blanes_2004}, the idea of increasing the order of a low-order scheme using a map and its inverse as pre- and post-processors is explored in greater detail. It would be interesting to determine if even higher-order DVIs may be derived using more elaborate processing than the more-or-less obvious processors described here.


\section{Magnetic field line and guiding center examples\label{MFL-GC}}

In this section we apply the discretizations of Eqs.\,\eqref{centered_LD} and \eqref{centered_LD-prime} to the Lagrangians for the magnetic field line problem and the guiding center system, both described in Refs.\,\onlinecite{Ellison_2018,Ellison_thesis}.

\subsection{Magnetic field line\label{MFL}}
For the problem of tracing magnetic field lines, we take the action to be equal to the flux
\begin{align}
    \Phi= \int L\left(\mathbf{x}, \frac{d \mathbf{x}}{d t}\right) dt = \int \mathbf{A}(\mathbf{x}) \cdot \frac {d\mathbf{x}}{dt}dt,\label{Adot-dx}
\end{align}
where $\mathbf{A}$ is the magnetic vector potential. The invariance with respect to reparameterization of time\cite{Ellison_2018} in Eq.~\eqref{Adot-dx} is consistent with its Euler-Lagrange equation, namely $(d\mathbf{x}/dt)\times \mathbf{B}=0$, 
where $\mathbf{B}=\nabla \times \mathbf{A}$ is the magnetic field: this equation determines the direction of the flow but not the speed. This time invariance is dealt with by parameterizing the field line trajectory in terms of one of the coordinates (without loss of generality, we choose $x^3$) instead of ``time" $t$. 
\begin{align}
    \Phi=\int L(x^1,x^2,x^3,dx^2/dx^3,x^3)dx^3 
    =\int \left( A_2(x^1,x^2,x^3)dx^2/dx^3+A_3(x^1,x^2,x^3)\right)dx^3.\label{MFL-z-action}
\end{align}
This Lagrangian is of the form in Eq.\,\eqref{usual_suspect}, with $f = [0 \quad A_2]$  and $H=-A_3$; the Euler-Lagrange equations for Eq.\,\eqref{MFL-z-action} are
\begin{subequations}
    \begin{align}
    dx^1/dx^3 &=(A_{3,2}-A_{2,3})/A_{2,1} = B^1/B^3 \\
    dx^2/dx^3 &=-A_{3,1}/A_{2,1} = B^2/B^3.
    \end{align}
\end{subequations}

The MDVI then follows from Eq.~\eqref{mdvi-noncanonical}, which when expressed in terms of the magnetic vector potential becomes:
\begin{subequations}
\begin{align}
A_{2,1}(k + \half) \left(x^2_{k+1} - x^2_k \right) + h A_{3,1}(k+\half) & = 0 \\
\half\left(A_{2,2}(k + \half) \left(x^2_{k+1} - x^2_k \right) + A_{2,2}(k - \half) \left(x^2_{k} - x^2_{k-1} \right) \right) - \quad & \nonumber \\
\left( A_2(k+\half) - A_2(k-\half) \right) + \tfrac{h}{2}\left( A_{3,2}(k+\half) + A_{3,2}(k-\half) \right) & = 0,
\end{align}
\end{subequations}
where $(k+\half)$ denotes evaluation at $(x^1_{k+1/2}, (x^2_{k+1} + x_k^2)/2, x^3_{k+1/2})$.

imilarly, the TDVI algorithm for the magnetic field line problem follows from Eq.~\eqref{tdvi-noncanonical}:
\begin{subequations}
   \begin{align}
   &\half \left(A_{2,1}(x_{k+1/2}^1, x_{k+1}^2, x_{k+1/2}^3) + A_{2,1}(x_{k+1/2}^1, x_{k}^2, x_{k+1/2}^3) \right) \left(x_{k+1}^2 - x_k^2 \right) + \nonumber \\
   & \tfrac{h}{2} \left(A_{3,1}(x_{k+1/2}^1, x_{k}^2, x_{k+1/2}^3) + A_{3,1}(x_{k+1/2}^1, x_{k+1}^2, x_{k+1/2}^3) \right) = 0 \\
&  \tfrac{1}{2} \left(A_{2,2}(x^1_{k+1/2}, x^2_k, x^3_{k+1/2})(x^2_{k+1} - x^2_{k}) + A_{2,2}(x^1_{k-1/2}, x^2_k, x^3_{k-1/2})(x^2_k - x^2_{k-1}) \right)  + \nonumber \\
& \tfrac{1}{2} \left( A_2(x^1_{k-1/2}, x^2_{k-1}, x^3_{k-1/2}) + A_2(x^1_{k-1/2}, x^2_{k}, x^3_{k-1/2})\right) - \nonumber \\ 
& \half \left( A_2(x^1_{k+1/2}, x^2_{k}, x^3_{k+1/2}) + A_2(x^1_{k+1/2}, x^2_{k+1}, x^3_{k +1/2})\right) + \nonumber \\
& \frac{h}{2} \left( A_{3,2}(x^1_{k-1/2}, x^2_{k}, x^3_{k-1/2}) + A_{3,2}(x^1_{k+1/2}, x^2_{k}, x^3_{k+1/2}) \right) = 0.
   \end{align}
\end{subequations}

Numerical tests of the midpoint and trapezoidal DVI schemes are presented in Sec.\,\ref{sec:examples}.




\subsection{Guiding center equations\label{GC-formulation}}

We treat the example of the guiding center equations based on discretizing the toroidally-regularized\cite{Burby_Ellison_2017} phase space Lagrangian
\begin{align}
    L(z,\dot{z})&= A_2(\bm{x})\dot{x}^2+(A_3(\bm{x})+u)\dot{x}^3 - H_{\text{gc}} \nonumber\\
    & \equiv A^\dagger_2(\bm{x})\,\dot{x}^2 + A^\dagger_3(\bm{x},u)\,\dot{x}^3 - H_{\text{gc}}
\end{align}
where $H_{gc}=\tfrac{1}{2}(B^3/B)^2\,u^2+\mu B - \tfrac{1}{2}E_\perp^2/B^2 - u\bm{b}\cdot (\bm{E}\times \nabla x^3)/B+\varphi$ is the guiding center Hamiltonian, in toroidally-regularized noncanonical variables,\cite{Littlejohn_1981,Ellison_2018,Burby_Ellison_2017} $u$ is the toroidally-regularized parallel velocity, $\mu$ is the magnetic moment, $B$ is the magnitude of the magnetic field, and $\varphi$ is the scalar potential for the electric field $\bm{E} = -\nabla \varphi $. For simplicity, we neglect time-dependence in the electromagnetic fields. Toroidal regularization requires $|B^3|>0$ everywhere. In toroidal geometries relevant to magnetic fusion energy, possible choices for coordinates include $(x^1,x^2,x^3) = (R,Z,\phi)$ or $(x^1,x^2,x^3) = (r,\theta,\phi)$, where $(R,Z,\phi)$ denote cylindrical coordinates, and $(r,\theta,\phi)$ denote toroidal coordinates. As in the magnetic field example, the gauge condition $A_1=0$ was imposed to lead to proper degeneracy. (The original form \cite{Ellison_2018,Ellison_thesis}, required that the covariant component $B_1$ of the magnetic field also vanish, but in a subsequent improvement,\cite{Burby_Ellison_2017} that condition was relaxed.)

The second-order-accurate DVIs follow by establishing the correspondence with Eq.~\eqref{usual_suspect}; in this case, $f = [0 \quad A_2^\dagger \quad A_3^\dagger \quad 0]$ and $H = H_{gc}$. We discretize this phase space Lagrangian by the midpoint and trapezoidal DVI schemes for second order accuracy. For the midpoint DVI discretization of the guiding center system, we use the discrete Lagrangian
\begin{align}
& L_d(r_{k+1/2}, \theta_{k}, \theta_{k+1}, \phi_k, \phi_{k+1}, u_{k+1/2}) = \nonumber \\
& \quad A_\theta(k+\half) (\theta_{k+1} - \theta_k) +  \left( A_\phi(k+\half) + u_{k+1/2} \right) (\phi_{k+1} - \phi_k) - h H(k+\half).
\end{align}
The MDVI scheme, following from Eq.~\eqref{mdvi-noncanonical}, is
\begin{subequations}
\begin{align}
& A_{\theta, r}(k+\half) (\theta_{k+1} - \theta_k) + A_{\phi, r}(k+\half) (\phi_{k+1} - \phi_k) - h H_{,r}(k+\half) = 0\\
& \tfrac{1}{2} A_{\theta,\theta}(k-\half) (\theta_k - \theta_{k-1}) + \tfrac{1}{2} A_{\theta, \theta}(k+\half) (\theta_{k+1} - \theta_k) + \nonumber \\
& \tfrac{1}{2} A_{\phi,\theta}(k-\half) (\phi_k - \phi_{k-1}) + \tfrac{1}{2} A_{\phi, \theta} (k+\half) (\phi_{k+1} - \phi_k) + \nonumber \\
& A_\theta(k-\half) + u_{k-1/2} - A_\theta(k+\half) - u_{k+1/2} - \tfrac{h}{2} \left( H_{,\theta}(k-\half) + H_{,\theta}(k+\half) \right) = 0 \\
& \tfrac{1}{2} A_{\theta,\phi}(k-\half) (\theta_k - \theta_{k-1}) + \tfrac{1}{2} A_{\theta, \phi}(k+\half) (\theta_{k+1} - \theta_k) + \nonumber \\
& \tfrac{1}{2} A_{\phi,\phi}(k-\half) (\phi_k - \phi_{k-1}) + \tfrac{1}{2} A_{\phi, \phi} (k+\half) (\phi_{k+1} - \phi_k) + \nonumber \\
& A_\phi(k-\half) + u_{k-1/2} - A_\phi(k+\half) - u_{k+1/2} - \tfrac{h}{2} \left( H_{,\phi}(k-\half) + H_{,\phi}(k+\half) \right) = 0 \\
& \phi_{k+1} - \phi_k - \tfrac{h}{2} H_{,u}(k+\half) = 0,
\end{align}
\end{subequations}
where $(k+1/2)$ refers to $(r_{k+1/2},(\theta_{k}+\theta_{k+1})/2,(\phi_{k}+\phi_{k+1})/2)$ and similarly for $(k-1/2)$.

For the trapezoidal DVI scheme, we use the discrete Lagrangian in Eq.~\eqref{centered_LD-prime}, namely
\begin{equation}
\begin{split}
&L_d(r_{k+1/2}, \theta_{k}, \theta_{k+1}, \phi_k, \phi_{k+1}, u_{k+1/2}) = \nonumber \\
& \quad \half \left( A_\theta(r_{k+1/2}, \theta_k, \phi_k) + A_\theta(r_{k+1/2}, \theta_{k+1}, \phi_{k+1}) \right) (\theta_{k+1} - \theta_k) +\nonumber \\
& \quad \half \left( A_\phi(r_{k+1/2}, \theta_k , \phi_k ) + A_\phi(r_{k+1/2}, \theta_{k+1} , \phi_{k+1} ) 
 + 2 u_{k+1/2} \right) (\phi_{k+1} - \phi_k) - \nonumber \\
& \quad \tfrac{h}{2} \left( H(r_{k+1/2}, \theta_k , \phi_k, u_{k+1/2} ) + H(r_{k+1/2}, \theta_{k+1} , \phi_{k+1}, u_{k+1/2})\right].
 \end{split}
\end{equation}
The guiding center TDVI update equations follow from straightforward variations of this discrete Lagrangian (or from Eq.~\eqref{tdvi-noncanonical}) and shall be omitted here for brevity.

\begin{figure}
    \centering
    \includegraphics[width=0.7\columnwidth]{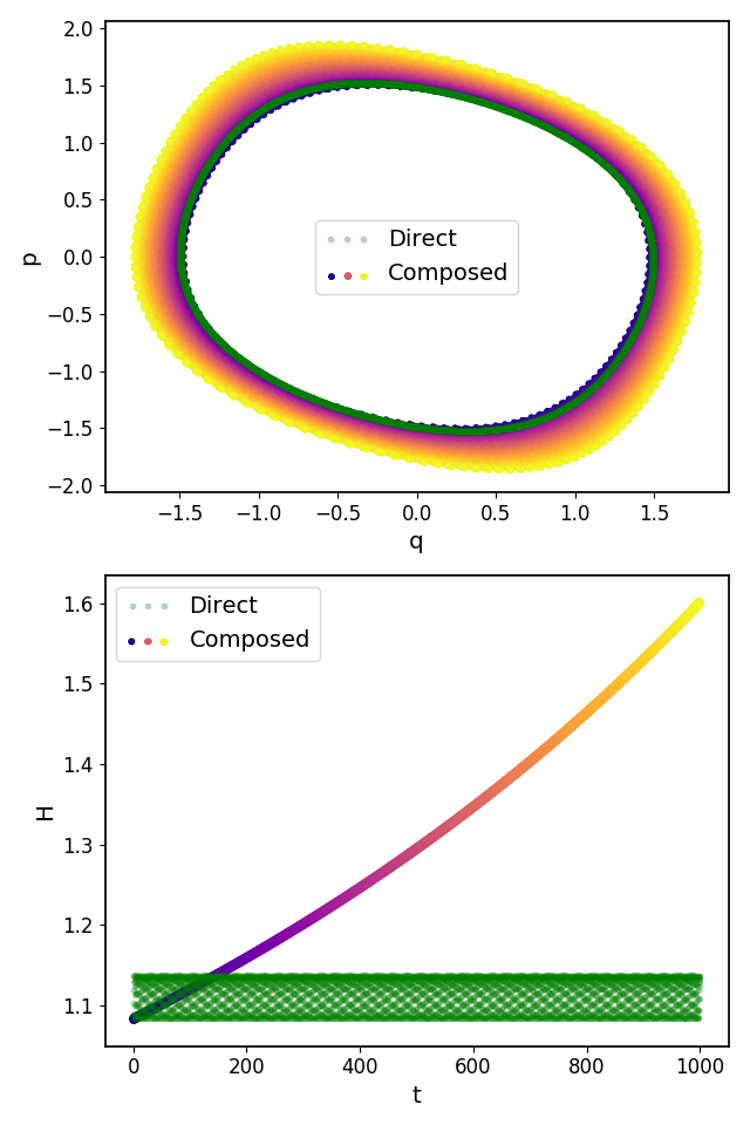}
    \caption{Phase portrait (top) and energy vs time (bottom) for two different methods of integrating the non-reversible Hamiltonian $H=(p^{2}+q^{2})/2+\alpha q p^{3}/3$.
    The `direct' scheme of Eq.\,\eqref{eq:Mod-SE-scheme}  and shown in green, is first-order accurate and preserves a discrete symplectic structure that depends on the numerical time step size $h$, with similar results for the adjoint scheme in Eq.\,\eqref{eq:Adj-Mod-SE-scheme}. Composing the direct method with its adjoint achieves second-order accuracy by virtue of centering in time, giving smaller oscillations in $H$. However, the numerical results indicate that the composed scheme leads to growth in $H$ and therefore does not preserve a two-form.}
    \label{fig:my_label-AA_adjoint}
\end{figure}

\section{Numerical tests\label{sec:examples}}
In this section we present numerical tests showing that the MDVI and TDVI schemes provide all the advantages of variational schemes and are second order accurate.

\subsection{Failure of composing with the adjoint}\label{Numerical-no-go-composing}

In Sec. \ref{Sec.Composing} we noted an example of a scheme with a preserved two-form with $\omega(h)\neq \omega(-h)$, suggesting complications if the scheme and its adjoint are composed in an attempt to obtain second order accuracy. In this section we give a concrete example for which the composed scheme appears not to have a preserved two-form. We consider the non-reversible Hamiltonian
 \begin{align}
 H=\frac{p^{2}+q^{2}}{2}+\frac{\alpha q p^{3}}{3}.\label{Non-rev-Hamiltonian}
 \end{align}
  Numerical tests show that both of the schemes of Eqs.~(\ref{eq:Mod-SE-scheme})
 and (\ref{eq:Adj-Mod-SE-scheme}) have good long-time behavior for this Hamiltonian. The orbits
 of the two first order schemes have bounded behavior but, with
 first order accuracy, have noticeable oscillations in the value
 of $H$ between bounds. The composed scheme, which has smaller oscillations in $H$, otherwise performs poorly: the points that should stay near 
 the surface constant $H$ surfaces spiral out, as shown in Fig.\,\ref{fig:my_label-AA_adjoint}. The behavior of $H$ in time shows exponential increase, with $\gamma=O(h^2)$. Appendix B shows an example for which composition does appear to give useful results, but this particular example is a reversible system, and this reversibility by itself appears to be responsible for the positive results.

\subsection{Magnetic field line\label{MFL-numerical}}

To test the proposed algorithms in a magnetic configuration representative of those of interest to the magnetic fusion community, we use the simple analytic expression for an axisymmetric, toroidal magnetic field presented in Ref.\,\onlinecite{Qin_2009}:
\begin{equation}
    \vec{A}(r, \theta, \phi) = \frac{B_0 R_0}{\cos^2 \theta} \left(r \cos \theta -R_0 \log \left( 1 + \frac{r \cos \theta}{R_0}\right) \right)\nabla \theta 
\end{equation}
\begin{equation}
- \frac{B_0 r^2}{2 q_0} \nabla \phi,   
    \label{eq:vecpotential_axisym}
\end{equation}
where $B_0$ is a magnetic field amplitude, $R_0$ is the major radius and $q_0=\sqrt{2}$ is the on-axis safety factor. The variables $(x^1,x^2,x^3)$ of Eq.~\eqref{MFL-z-action} are replaced by simple toroidal coordinates $(r,\theta,\phi)$, and $\phi$ takes the place of the time variable, as discussed in Sec.~\ref{MFL}.

\begin{figure}
    \centering
    \includegraphics[width=0.8\columnwidth]{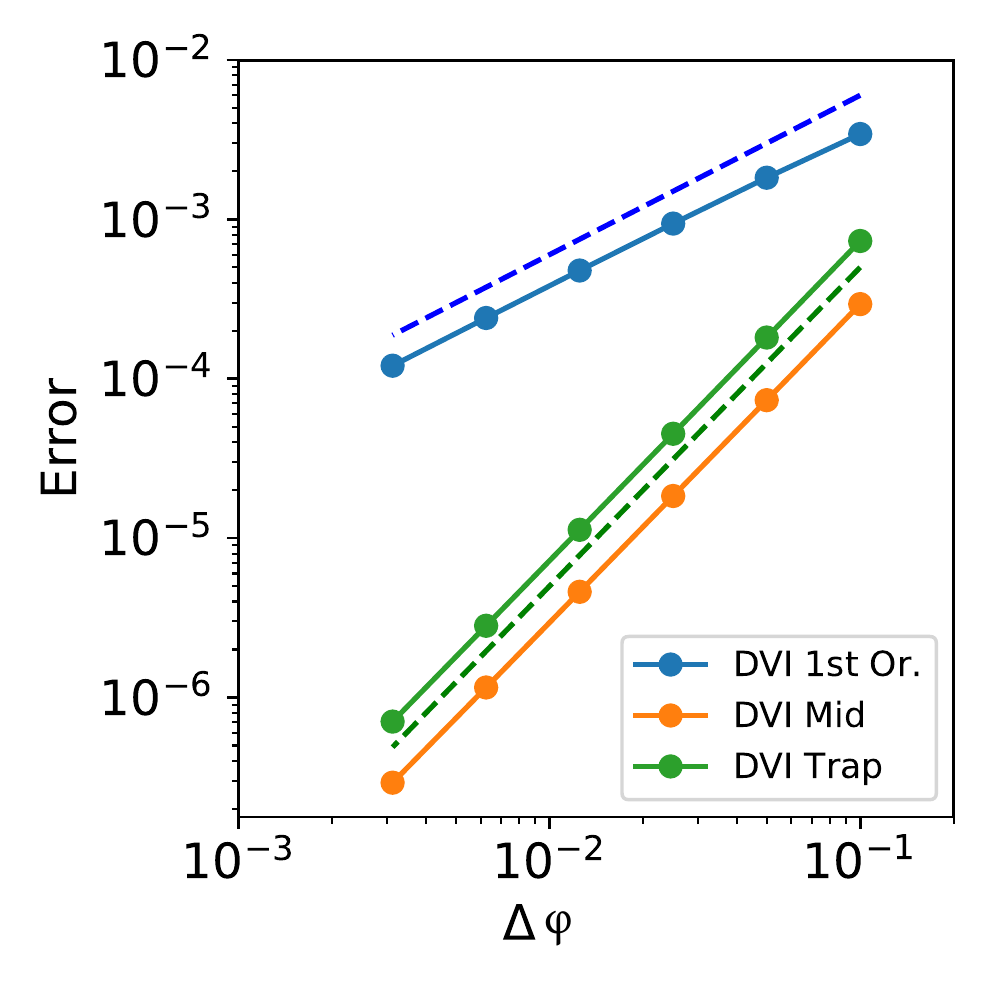}
    \caption{Comparison of the order of accuracy of the MDVI scheme, the TDVI scheme, and the collocated first-order accurate scheme of Ref.~\onlinecite{Ellison_2018} for the magnetic field line problem, over a range of $2^5$ in $\Delta \phi$. The dotted curves show curves Error$\sim h$ (blue) and Error$\sim h^2$ (green) for comparison.}
    \label{fig:my_labelBField_conv}
\end{figure}

First, we demonstrate numerically that the MDVI and TDVI achieve the anticipated second-order accuracy. Next, we demonstrate that the proposed algorithms exhibit the expected qualitative behavior of symplectic integrators. To the axisymmetric magnetic field of Eq.~\eqref{eq:vecpotential_axisym} we add a perturbation of the form:
\begin{equation}
    \vec{A}(r, \theta, \phi) = \vec{A}_0(r, \theta, \phi) - \frac{B_0 r^2}{2 q_0}\sum_i \delta_i \sin(m_i \theta - n_i \phi)\nabla \phi.
\end{equation}
where $\vec{A_0}$ is given by Eq.~\eqref{eq:vecpotential_axisym}. We choose two perturbative harmonics, $m_0=3, n_0=2$ and $m_1=7, n_1=5$ with amplitudes $\delta_0 = \delta_1 = 10^{-4}$. These perturbations lead to magnetic islands at the resonant magnetic surfaces, and small stochastic field line regions in the $(m_1,n_1)=(3,2)$ and $(m_2,n_2)=(7,5)$ resonant regions. 

We test the order of accuracy by varying the step $\Delta \phi$ in factors of two across a range of $2^5$ and comparing with a fourth-order Runge-Kutta scheme with an extremely small value of $\Delta \phi$. See Fig.\,\ref{fig:my_labelBField_conv}. We compare the MDVI and TDVI schemes with each other and with the first order variant described in \onlinecite{Ellison_2018} in which the three variables $(r,\theta,\phi)$ are collocated, i.e.~not staggered. We integrate over a large number of toroidal transits for this comparison. Second order accuracy is confirmed for the MDVI and TDVI schemes, as is first order accuracy of the non-staggered scheme. The MDVI and TDVI schemes exhibit relatively similar accuracy, with the MDVI being more accurate for this particular example. 

The Poincar\'e surface of section at $\phi=0$ is shown in Fig.\,\ref{fig:my_label_B_Poincare}. The first two panels are for the second order Runge-Kutta (RK2) scheme, with $\Delta \phi=0.1$ and $\Delta \phi=5\times 10^{-4}$, respectively. last two panels are for the MDVI scheme for the same two values of $\Delta \phi$. The RK2 results in Fig.\,\ref{fig:my_label_B_Poincare}(a) show blurriness of the KAM surfaces, falsely indicating a higher degree of magnetic stochasticity. The results in (b) are greatly improved. The two MDVI cases in (c) and (d), for very different steps $\Delta \phi$, look almost identical, showing very good preservation of KAM tori. The TDVI scheme leads to results essentially indistinguishable from those of the MDVI scheme for comparable time steps.

\begin{figure*}
    \centering
    \includegraphics[width=0.8\textwidth]{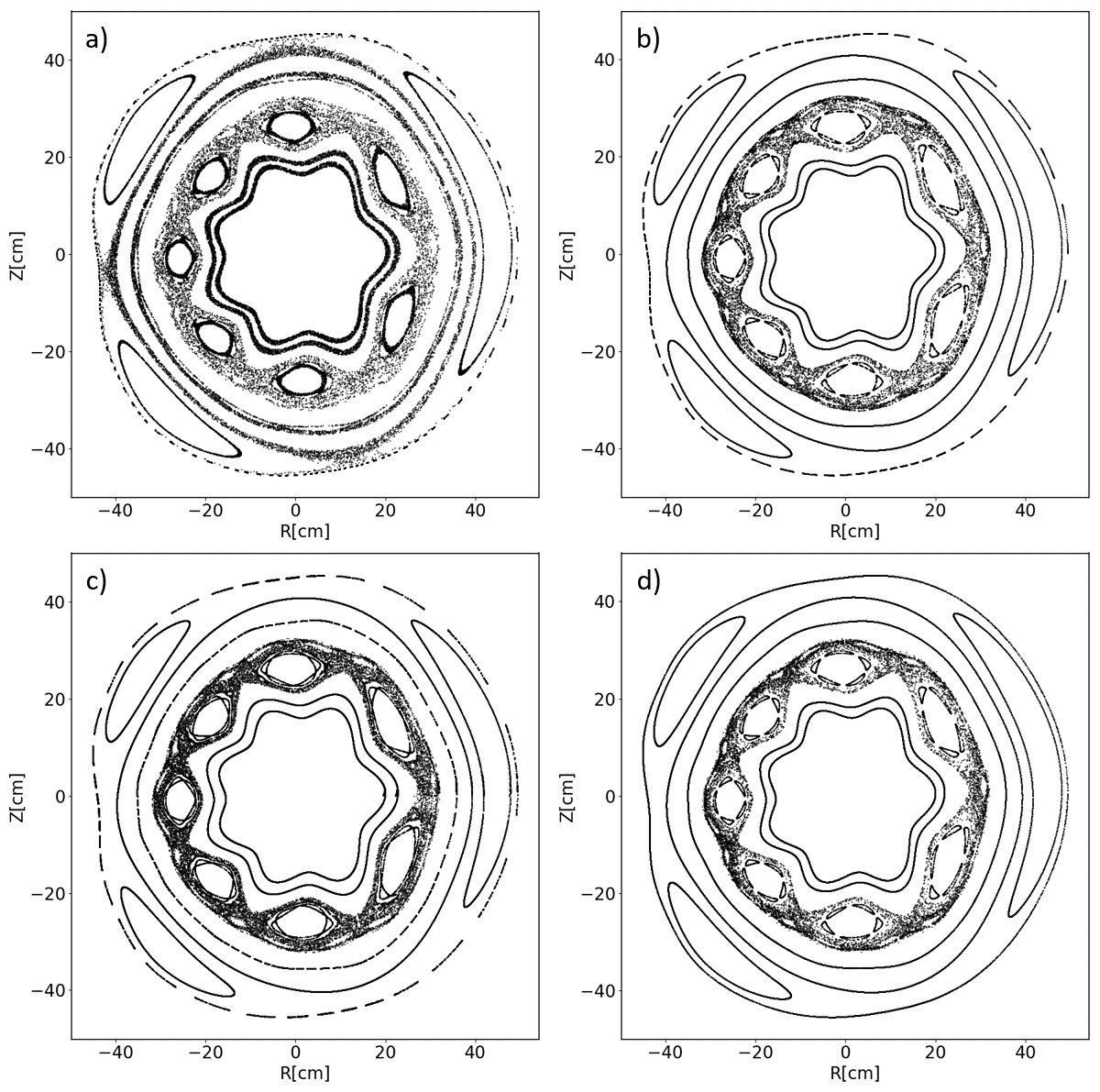}
    \caption{Poincar\'e Section $\phi=0$ for the magnetic field line problem. For comparison we show results from a) the second order accurate Runge-Kutta (RK2) scheme with $\Delta \phi = 0.1$, b) the RK2 scheme with $\Delta \phi = 0.005$, c) the MDVI scheme with $\Delta \phi = 0.1$, and  d) the MDVI scheme with $\Delta \phi = 0.005$. In all cases we have $\phi_\text{final} = 3\times 10^5$. The $(m,n)=(3,2)$ and $(7,5)$ island chains are evident, and in the RK2 scheme in (a) the KAM surfaces are blurred.}
    \label{fig:my_label_B_Poincare}
\end{figure*}

\begin{figure}
    \centering
    \includegraphics[width=0.9\columnwidth]{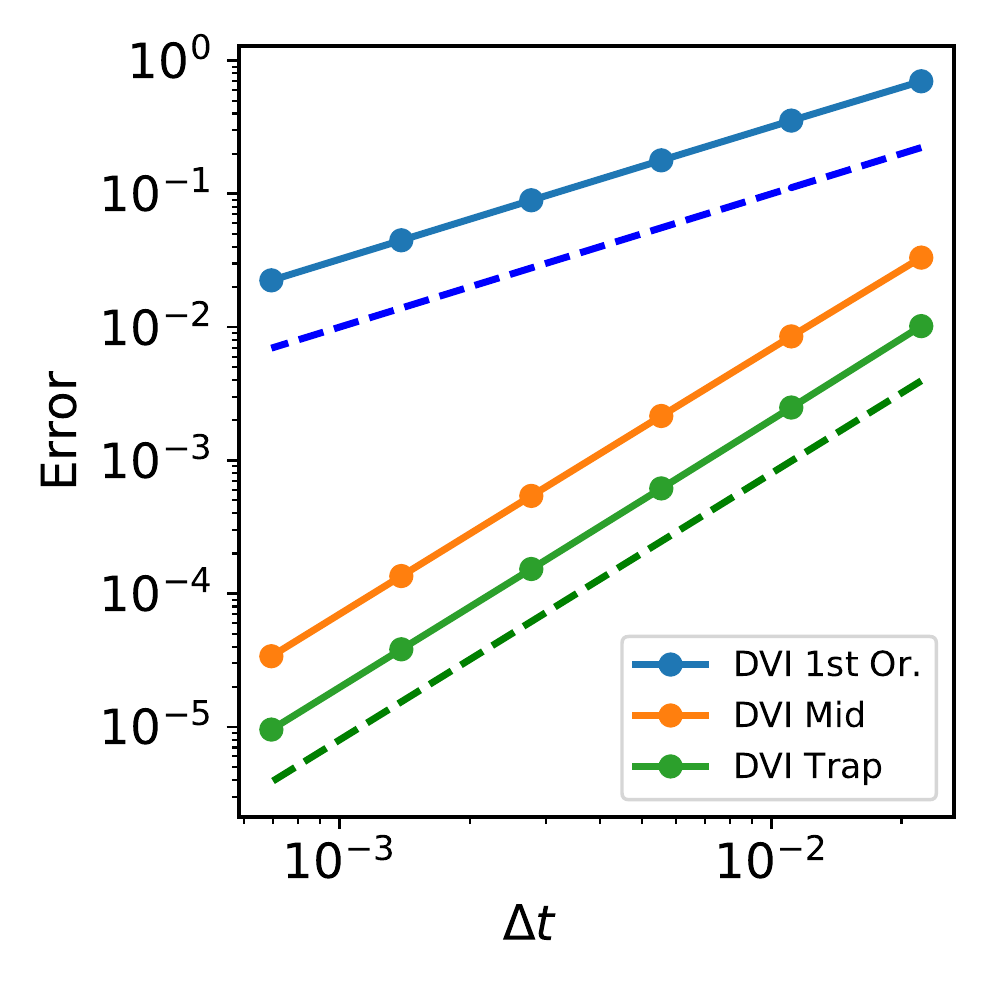}
    \caption{ Error in the second order schemes MDVI and TDVI and the first order collocated scheme of Ref.~\onlinecite{Ellison_2018}, compared with `exact' results, i.e.~results obtained with a fourth-order Runge-Kutta method with extremely small time step $h=\Delta t$. Results are shown over almost two orders of magnitude in $\Delta t =h$. First order and second order reference lines, in blue dashed and green dashed, respectively, are shown for comparison. Second order accuracy for the MDVI and TDVI schemes is confirmed.}
    \label{fig:my_label_GC_convergence}
\end{figure}


\subsection{Guiding center \label{GC-label}}
In this section we show numerical results for the guiding center example, with time independent potentials and scalar potential $\varphi=0$. That is, physically, the electric field is zero. See Fig.\,\ref{fig:my_label_GC_convergence}, confirming second order accuracy in $h=\Delta t$ for the MDVI and TDVI schemes.  Interestingly, the error in the MDVI scheme about a factor of $3$ larger, in contrast with the results shown in Fig.\,\ref{fig:my_labelBField_conv}, where the MDVI and TDVI results are reversed. As for the magnetic field line results, this difference is due to the very long run times.   


\section{Non-uniform time stepping\label{sec:Non-unifTS}}

We first review the variational form of the extended phase space method
for a Hamiltonian system in canonical variables. This method allows
us to prescribe variable time steps. We show an example, of discretizing this
action by the symplectic Euler scheme, and show that it is a single-step
method, i.e.~a degenerate variational integrator. We proceed to formulate
an extended phase space method for systems having the form of noncanonical
variables prescribed in Eq.~\eqref{usual_suspect}, giving examples
of magnetic field line integration and the guiding center equations.
For the most straightforward first order accurate scheme with non-uniform time steps, we exhibit an analog of the modified symplectic Euler schemes of Eqs.~\eqref{eq:Mod-SE-scheme} and \eqref{eq:Adj-Mod-SE-scheme}. 
We show that this method is a DVI, connecting only two time levels,
according to the discrete Hessian method as well as by a direct substitution.
We also discuss extending these non-uniform time step methods to second
order accuracy, as well as using adaptive time step control based
on an error estimator.

\subsection{Canonical systems\label{subsec:Nonuniform-time-stepping-canonical}}

We first review the extended phase space method in canonical variables,
in one degree of freedom for transparency. The time-stepping is defined
by a time step density $\rho(q,p)$, with $\rho(q,p)dt=d\zeta$. We
prescribe uniform steps in the time-like variable $\zeta$, with $\rho\Delta t\approx\Delta\zeta=h=\text{const}$.
We extend the action $S=\int(p\,dq/dt-H)dt$ for the Hamiltonian $H=H(q,p)$
to 
\begin{equation}
S_{e}=\int\left[p\frac{dq}{d\zeta}-\frac{H(q,p)}{\rho(q,p)}+\pi\left(\frac{dw}{d\zeta}-\frac{1}{\rho(q,p)}\right)\right]d\zeta.\label{eq:Ad-act-Lagrange-multiplier}
\end{equation}
Here we have written time as the dependent variable $w$, and the
Lagrange multiplier $\pi$ enforces the time step condition as a constraint.
We rewrite this as
\begin{equation}
S_{e}=\int\left[p\frac{dq}{d\zeta}+\pi\frac{dw}{d\zeta}-\frac{H(q,p)+\pi}{\rho(q,p)}\right]d\zeta,\label{eq:Ad-act-canonical-form}
\end{equation}
leading to the extended phase space Hamiltonian $K(q,p,\pi)=\left(H(q,p)+\pi\right)/\rho(q,p)$,
with an added canonically conjugate pair, $(q,p)\to(q,p,w,\pi)$.
It is clear that, since $K$ does not depend on $\zeta$ explicitly,
$K$ is exactly conserved. That is, if we set $\pi=-H$ initially,
then $H+\pi$ remains exactly zero. Thus, the equations for $(q,p,w,\pi)$,
using $H+\pi=0$, are 
\[
\frac{dq}{d\zeta}=\frac{H_{2}(q,p,w)}{\rho(q,p)},\,\,\frac{dp}{d\zeta}=-\frac{H_{1}(q,p,w)}{\rho(q,p)},
\]
\[
\frac{dw}{d\zeta}=\frac{1}{\rho(q,p)},\,\,\frac{d\pi}{d\zeta}=0.
\]
Note that the imposed time step requirement is satisfied. The extension
to a non-autonomous system, with $H(q,p,t)$ is straightforward.

For discrete integration, it is important to keep the terms (not shown)
derived from $H+\pi$ in Eq.~\eqref{eq:Ad-act-canonical-form}: this quantity is not exactly zero for the
discrete equations and if this quantity is set exactly equal to zero,
the symplectic nature is lost\cite{Richardson_2011}.

To illustrate, we discretize $S_{e}$ as in the symplectic Euler scheme,
here for two degrees of freedom, for uniform stepping in $\zeta$, $\Delta \zeta=h$. We have
\[
S_{e1}=\sum_{k=0}^{N-1}\left[p_{k}(q_{k+1}-q_{k})+\pi_{k}(w_{k+1}-w_{k})\right]
\]
\begin{equation}
-\sum_{k=0}^{N-1} \left[h\frac{H(q_{k+1},p_{k},w_{k+1})+\pi_{k}}{\rho(q_{k+1},p_{k})}\right].\label{eq:SE-Action-Extended}
\end{equation}


The symplectic Euler scheme derived from Eq.~(\ref{eq:SE-Action-Extended})
preserves the canonical two degree of freedom two-form
\begin{equation}
\omega=dq\wedge dp+dw\wedge d\pi,\label{eq:EPS-SE-two-form}
\end{equation}
by inspection or by taking the endpoint values of $dS_{e1}$.
This scheme is a single-step scheme, shown either by inspection or
by computing the discrete Hessian.

We now consider the special class of systems in noncanonical variables
with action of the form $\int(f_{i}\dot{x}_{i}-H(\boldsymbol{x},\boldsymbol{y}))dt$,
as in Eq.~\eqref{usual_suspect}. With time step condition $\rho(\boldsymbol{x},\boldsymbol{y})dt=d\zeta$
and $t\to w$ we can write the analog to Eqs.~(\ref{eq:Ad-act-Lagrange-multiplier},\ref{eq:Ad-act-canonical-form}),
\[
S=\int\left[f_{i}(\boldsymbol{x},\boldsymbol{y})\frac{dx^{i}}{d\zeta}-\frac{H(\boldsymbol{x},\boldsymbol{y})}{\rho(\boldsymbol{x},\boldsymbol{y})}+\pi\left(\frac{dw}{d\zeta}-\frac{1}{\rho(\boldsymbol{x},\boldsymbol{y})}\right)\right]d\zeta
\]
\begin{equation}
=\int\left[f_{i}\frac{dx^{i}}{d\zeta}+\pi\frac{dw}{d\zeta}-\frac{H(\boldsymbol{x},\boldsymbol{y},w)+\pi}{\rho(\boldsymbol{x},\boldsymbol{y})}\right]d\zeta.\label{eq:General_fdx_nonunif}
\end{equation}
Again, it is evident in the first form that $\pi$ is a Lagrange multiplier
enforcing the time step condition. We can apply uniform stepping in
the new time-like variable $\zeta$, with $\Delta\zeta=h=$const., to
give the required non-uniform time stepping. In Eq.~(\ref{eq:General_fdx_nonunif})
a term involving the canonical pair $(w,\pi)$, namely $\pi dw/d\zeta$,
is added to $f_{i}dx^{i}/d\zeta$. Furthermore, the new Hamiltonian
is $K=(H+\pi)/\rho$, and the resulting action is of the same restricted
noncanonical class of Eq.~\eqref{usual_suspect}. Therefore any discretization that can be applied
to the action with uniform time step can be applied to this noncanonical
extended phase space version.

\subsection{Magnetic field line integration\label{sec:MagFieldLine}}

For the magnetic field line integration problem, we use a gauge with
$A_{1}=0$, as in Sec.~\ref{MFL}. For nonuniform time
stepping, we first go back to considering $x^3$ to be a coordinate
and put the action in the form in Eq.~(\ref{eq:General_fdx_nonunif}),
leading to
\begin{equation}
\Phi_{e}=\int\left[A_{2}(\boldsymbol{x})\frac{dx^2}{d\zeta}+\frac{A_{3}(\boldsymbol{x})}{\rho(\boldsymbol{x})}+\pi\left(\frac{dx^3}{d\zeta}-\frac{1}{\rho(\boldsymbol{x})}\right)\right]d\zeta,\label{eq:MFL-action-adaptive}
\end{equation}
where $\pi$ is a Lagrange multiplier enforcing the time step restriction and $\boldsymbol{x}=(x^1,x^2,x^3)$.
This can again be put into the form
\begin{equation}
\Phi_{e}=\int\left[A_{2}(\boldsymbol{x})\frac{dx^2}{d\zeta}+\pi\frac{dx^3}{d\zeta}-\frac{-A_{3}(\boldsymbol{x})+\pi}{\rho(\boldsymbol{x})}\right]d\zeta.\label{eq:MFL-action-adaptive-2}
\end{equation}
This action if of the form in Eq.~(\ref{eq:General_fdx_nonunif}),
with two degree of freedom Hamiltonian equal to $(-A_{3}+\pi)/\rho$.

We discretize this in a manner similar to the modified (adjoint) symplectic
Euler scheme described in Eq.~\eqref{eq:Mod-SE-scheme} by forming $\Phi_{e1}=\sum_{k}hL_{d}(\boldsymbol{x}_{k},\pi_{k},\boldsymbol{x}_{k+1},\pi_{k+1})$,
or
\[
\Phi_{e1}=\sum_{k}\left[A_{2}(\boldsymbol{x}_{k+1})(x^2_{k+1}-x^2_{k})+\pi_{k+1}(x^3_{k+1}-x^3_{k})\right]
\]
\begin{equation}
+\sum_k \left[h\frac{A_{3}(\boldsymbol{x}_{k+1})-\pi_{k+1}}{\rho(\boldsymbol{x}_{k+1})}\right].\label{eq:MFL-discrete-action-rho}
\end{equation}

Compared with Eq.~\eqref{eq:Mod-SE-scheme}, $p_{k+1}\to A_{2}(\boldsymbol{x}_{k+1})$
and $H(q_{k+1},p_{k+1})\to A_{3}(\boldsymbol{x}_{k+1})$ with more
direct substitutions for $x^3_{k}$ and $\pi_{k}$. 

The DEL equations from $d\Phi_{e1}=0$ lead to the fourth order system
\[
A_{2,1}(\boldsymbol{x}_{k})(x^2_{k}-x^2_{k-1})+h\frac{A_{3,1}(\boldsymbol{x}_{k})}{\rho(\boldsymbol{x}_{k})}
\]
\begin{equation}
+h\left[A_{3}(\boldsymbol{x}_{k})-\pi_{k}\right]\frac{\partial}{\partial x^1_{k}}\frac{1}{\rho(\boldsymbol{x}_{k})}=0,\label{eq:MFLxxx-adaptive-penult}
\end{equation}
\[
A_{2,2}(\boldsymbol{x}_{k})(x^2_{k}-x^2_{k-1})+A_{2}(\boldsymbol{x}_{k})-A_{2}(\boldsymbol{x}_{k+1})
\]
\begin{equation}
+h\frac{A_{3,2}(\boldsymbol{x}_{k})}{\rho(\boldsymbol{x}_{k})}+h\left[A_{3}(\boldsymbol{x}_{k})-\pi_{k}\right]\frac{\partial}{\partial x^2_{k}}\frac{1}{\rho(\boldsymbol{x}_{k})}=0,\label{eq:MFLyyy-adaptive-penult}
\end{equation}
\[
A_{2,3}(\boldsymbol{x}_{k})(x^3_{k}-x^3_{k-1})+\pi_{k}-\pi_{k+1}+h\frac{A_{3,3}(\boldsymbol{x}_{k})}{\rho(\boldsymbol{x}_{k})}
\]
\begin{equation}
+h\left[A_{3}(\boldsymbol{x}_{k})-\pi_{k}\right]\frac{\partial}{\partial x^3_{k}}\frac{1}{\rho(\boldsymbol{x}_{k})}=0,\label{eq:MFLzzz-adaptive-penult}
\end{equation}
\begin{equation}
x^3_{k}-x^3_{k-1}-\frac{h}{\rho(\boldsymbol{x}_{k})}=0.\label{eq:MFLppp-adaptive-penult}
\end{equation}
For the uniform time stepping case of Sec.~\ref{MFL} the assumption $\rho=\rho(\boldsymbol{x})$ means that the time step
depends on $x^3$, which
takes the place of time.
This suggests the possibility that complications such as parametric instabilities related to having a time step density explicitly dependent on time might arise\cite{Richardson_2011}. In the case treated in this subsection, on the other hand, $\zeta$ rather than $z$ is the independent (time-like) variable.

As occurred in Ref.~\onlinecite{Ellison_2018}, equations (\ref{eq:MFLyyy-adaptive-penult},\ref{eq:MFLzzz-adaptive-penult})
appear to involve indices $k-1,\,\,k,\,\,k+1$ and therefore appear
to be two-step equations, suggesting that Eqs.~(\ref{eq:MFLxxx-adaptive-penult}-\ref{eq:MFLppp-adaptive-penult})
are difference equations of order higher than $4$. However, the discrete
Hessian can be shown to have rank $4$, consistent with a first order system in $(\boldsymbol{x},\pi)$; indeed similar substitutions
as those of Ref.~\onlinecite{Ellison_2018} lead to a fourth order system,
i.e.~a single-step method. That is, writing Eqs.~(\ref{eq:MFLyyy-adaptive-penult},\ref{eq:MFLxxx-adaptive-penult},\ref{eq:MFLzzz-adaptive-penult})
in the compact form
\begin{equation}
A_{2,1}(\boldsymbol{x}_{k})(x^2_{k}-x^2_{k-1})+hP(\boldsymbol{x}_{k},\pi_{k})=0,\label{eq:MFLxxx/simp}
\end{equation}
\begin{equation}
A_{2,2}(\boldsymbol{x}_{k})(x^2_{k}-x^2_{k-1})+A_{2}(\boldsymbol{x}_{k})-A_{2}(\boldsymbol{x}_{k+1})+hQ(\boldsymbol{x}_{k},\pi_{k})=0,\label{eq:MFLyyy/simp}
\end{equation}
\begin{equation}
A_{2,3}(\boldsymbol{x}_{k})(x^2_{k}-x^2_{k-1})+\pi_{k}-\pi_{k+1}+hR(\boldsymbol{x}_{k},\pi_{k})=0,\label{eq:MFLzzz/simp}
\end{equation}
we find $x^2_{k}-x^2_{k-1}$ from Eq.~(\ref{eq:MFLxxx/simp}) and substitute
into Eqs.~(\ref{eq:MFLyyy/simp},\ref{eq:MFLzzz/simp}). When the
indices are incremented $k\rightarrow k+1$ in Eqs.~(\ref{eq:MFLxxx/simp},\ref{eq:MFLppp-adaptive-penult})
we find
\[
A_{2,1}(\boldsymbol{x}_{k+1})(x^2_{k+1}-x^2_{k})+hP(\boldsymbol{x}_{k+1},\pi_{k+1})=0,
\]
\[
-hA_{2,2}(\boldsymbol{x}_{k})\frac{P(\boldsymbol{x}_{k},\pi_{k})}{A_{2,1}(\boldsymbol{x}_{k})}+A_{2}(\boldsymbol{x}_{k})-A_{2}(\boldsymbol{x}_{k+1})+hQ(\boldsymbol{x}_{k},\pi_{k})=0,
\]
\[
-hA_{2,3}(\boldsymbol{x}_{k})\frac{P(\boldsymbol{x}_{k},\pi_{k})}{A_{2,1}(\boldsymbol{x}_{k})}+\pi_{k}-\pi_{k+1}+hR(\boldsymbol{x}_{k},\pi_{k})=0,
\]
\begin{equation}
x^3_{k+1}-x^3_{k}-\frac{h}{\rho(\boldsymbol{x}_{k+1})}=0.\label{eq:MFLppp-adaptive-penult-1}
\end{equation}
Because only indices $k$ and $k+1$ are involved, the single-step property
predicted by the rank of the discrete Hessian is evident. 
Note the
solvability condition $A_{2,1}=B^{3}\neq0$, necessary for $z$ to
parameterize the length along the field line. 


\subsection{Guiding center equations\label{subsec:Guiding-center-equations}}



Guided by the results for the magnetic field line equations, it was
shown in Ref.~\onlinecite{Ellison_2018} that if the term $A_{1}^{\dagger}\dot{x}^1$
is zero, substitutions can be made to lead to a single-step scheme.
Because the requirement $A_{1}^{\dagger}(\boldsymbol{x},u)=A_{1}(\boldsymbol{x})+ub_{1}(\boldsymbol{x})=0$
must hold for arbitrary values of $u$, it requires both the gauge
condition $A_{1}=0$ and the physical condition $b_{1}=0$. In Ref.~\onlinecite{Ellison_2018}, numerical tests were performed for axisymmetric fields
and for coordinates such that the covariant component $b_{1}$ is
zero. In Ref.~\onlinecite{Burby_Ellison_2017}, it was shown that it is possible
to find coordinates such that this condition is satisfied for arbitrary
magnetic fields, provided one component, e.g.~the toroidal component
does not change sign.

For applying nonuniform time steps to the guiding center equations,
we again introduce a new time-like variable $\zeta$ such that $\rho(\boldsymbol{x},u)dt=d\zeta$
for the time step density $\rho(\boldsymbol{x},u)$. Then we make
the substitution, with $w=t$ and
\[
S_{gc}=\int\left[A_{2}^{\dagger}(\boldsymbol{x})\frac{dx^2}{dt}+A_{3}^{\dagger}(\boldsymbol{x})\frac{dx^3}{dt}-H_{gc}(\boldsymbol{x},u)\right]dt
\]
going to
\begin{equation}
S_{gc}=\int\left[A_{2}^{\dagger}(\boldsymbol{x})\frac{dx^2}{d\zeta}+A_{3}^{\dagger}(\boldsymbol{x})\frac{dx^3}{d\zeta}-\frac{H_{gc}(\boldsymbol{x},u)}{\rho(\boldsymbol{x},u)}\right] d\zeta\label{eq:Action-GC-nonunif-1}
\end{equation}
\[
+\int \left[\frac{dw}{d\zeta}-\frac{1}{\rho(\boldsymbol{x},u)}\right]d\zeta.
\]
Again this can be put in the form
\begin{equation}
S_{gc}=\int\left[A_{2}^{\dagger}(\boldsymbol{x})\frac{dx^2}{d\zeta}+A_{3}^{\dagger}(\boldsymbol{x})\frac{dx^3}{d\zeta}
+\pi\frac{dw}{d\zeta}-K(\boldsymbol{x},u,\pi)\right]d\zeta,\label{eq:Action-GC-nonunif-2}
\end{equation}
where $K(\boldsymbol{x},u,p_{0})=\left(H_{gc}(\boldsymbol{x},u)+\pi\right)/\rho(\boldsymbol{x},u)$.
This is an action in the form of Eq.~(\ref{eq:General_fdx_nonunif}) 
on the extended phase space $(\boldsymbol{x},u)\rightarrow(\boldsymbol{x},u,w,\pi)$.
We discretize this action in a manner similar to that in Eq.~(\ref{eq:MFL-discrete-action-rho}),
namely 
\begin{equation}
S_{gc1}=\sum_{k}hL_{d}(\boldsymbol{x}_{k},u_{k},w_{k},\pi_{k},\boldsymbol{x}_{k+1},u_{k+1},w_{k+1},\pi{}_{k+1}),\label{eq:S-gc-1}
\end{equation}
 with
\begin{equation}
hL_{d}=A_{2}^{\dagger}(\boldsymbol{x}_{k+1},u_{k+1})(x^2_{k+1}-x^2_{k})+A_{z}^{\dagger}(\boldsymbol{x}_{k+1},u_{k+1})(x^3_{k+1}-x^3_{k})\label{eq:GC-discrete-action-1}
\end{equation}
\[
+\pi_{k+1}(w_{k+1}-w_{k})-hK(\boldsymbol{x}_{k+1},u_{k+1},\pi_{k+1})
\]
where
\begin{equation}
K(\boldsymbol{x}_{k+1},u_{k+1},\pi_{k+1})=\frac{H_{gc}(\boldsymbol{x}_{k+1},u_{k+1})+\pi_{k+1}}{\rho(\boldsymbol{x}_{k+1},u_{k+1})}.\label{eq:GC-EPS-Hamiltonian}
\end{equation}
The DEL equations are
\begin{equation}
A_{2,1}^{\dagger}(\boldsymbol{x}_{k},u_{k})(x^2_{k}-x^2_{k-1})+A_{3,1}^{\dagger}(\boldsymbol{x}_{k},u_{k})(x^3_{k}-x^3_{k-1})
\end{equation}
\begin{equation}
    -hK_{1}(\boldsymbol{x}_{k},u_{k},\pi_{k})=0,\label{eq:GC-DELx}
\end{equation}
\[
A_{2,2}^{\dagger}(\boldsymbol{x}_{k},u_{k})(x^2_{k}-x^2_{k-1})+A_{3,2}^{\dagger}(\boldsymbol{x}_{k},u_{k})(x^3_{k}-x^3_{k-1})
\]
\begin{equation}
+A_{2}^{\dagger}(\boldsymbol{x}_{k},u_{k})-A_{2}^{\dagger}(\boldsymbol{x}_{k+1},u_{k+1})-hK_{2}(\boldsymbol{x}_{k},u_{k},\pi_{k})=0,\label{eq:GC-DELy}
\end{equation}
\[
A_{2,3}^{\dagger}(\boldsymbol{x}_{k},u_{k})(x^2_{k}-x^2_{k-1})+A_{3,3}^{\dagger}(\boldsymbol{x}_{k},u_{k})(x^3_{k}-x^3_{k-1})
\]
\begin{equation}
+A_{3}^{\dagger}(\boldsymbol{x}_{k},u_{k})-A_{3}^{\dagger}(\boldsymbol{x}_{k+1},u_{k+1})-hK_{3}(\boldsymbol{x}_{k},u_{k},\pi_{k})=0,\label{eq:GC-DELz}
\end{equation}
\begin{equation}
b_{2,k}(x^2_{k}-x^2_{k-1})+b_{3,k}(x^3_{k}-x^3_{k-1})-hK_{u}(\boldsymbol{x}_{k},u_{k},\pi_{k})=0,\label{eq:GC-DELu}
\end{equation}
\begin{equation}
\pi_{k}-\pi_{k+1}=0,\label{eq:GC-DELx0}
\end{equation}
\[
w_{k}-w_{k-1}=h\frac{\partial}{\partial p_{w,k}}K(\boldsymbol{x}_{k},u_{k},\pi_{k})
\]
\begin{equation}
=\frac{h}{\rho(\boldsymbol{x}_{k},u_{k})}.\label{eq:GC-DEL-p0}
\end{equation}
The assumed time-independence of the fields leads to the simple form
in Eq.~(\ref{eq:GC-DELx0}). Similar to the uniform time step case
of Ref.~\onlinecite{Ellison_2018}, the discrete Hessian has rank six, consistent with a first order system in $(x,y,z,u,w,\pi)$. We start by taking Eqs.~(\ref{eq:GC-DELx}) and (\ref{eq:GC-DELu}),
written as
\[
\left[\begin{array}{cc}
A_{2,1}^{\dagger}(\boldsymbol{x}_{k},u_{k}) & A_{3,1}^{\dagger}(\boldsymbol{x}_{k},u_{k})\\
b_{2,k} & b_{3,k}
\end{array}\right]\left[\begin{array}{c}
x^2_{k}-x^2_{k-1}\\
x^3_{k}-x^3_{k-1}
\end{array}\right]
\]
\[=\left[\begin{array}{c}
hK_{1}(\boldsymbol{x}_{k},u_{k},w_{k},\pi_{k})\\
hK_{u}(\boldsymbol{x}_{k},u_{k},w_{k},\pi_{k})
\end{array}\right].
\]
Solving for $x^2_{k}-x^2_{k-1}$ and $x^3_{k}-x^3_{k-1}$, which are written
in terms of quantities with index $k$, we substitute these into Eqs.~(\ref{eq:GC-DELy})
and (\ref{eq:GC-DELz}), increment $k\rightarrow k+1$ in Eqs.~(\ref{eq:GC-DELy}),
(\ref{eq:GC-DELz}) and (\ref{eq:GC-DEL-p0}). The resulting equations
involve time steps labeled with only $k$ and $k+1$. That is, consistent
with the discrete Hessian condition, the scheme is a single-step scheme,
a DVI, and parasitic modes cannot occur. 


\subsection{Extensions for higher accuracy\label{sec:HigherAccuracy}}

From the formulation in the last two sections, it is clear from Eqs.~(\ref{eq:MFL-action-adaptive-2})
and (\ref{eq:Action-GC-nonunif-2}) that the modification to prescribe
nonuniform time stepping leads to an addition to the phase space Lagrangian
of a term $\pi dw/d\zeta$ or $\pi dz/d\zeta$ and a modification
to the Hamiltonian $H\to(H+\pi)/\rho$, and these terms can be discretized
in exactly the same manner as in the uniform time step case. This
means that the modifications in this section can be applied to any
discretization of the phase space Lagrangian that leads to a DVI.
Therefore, it should be straightforward to construct a nonuniform
time step scheme for either of the second order accurate DVI methods
of Sec.~\ref{sec:DVI2}.

It is also clear that such discretizations can be applied to any time
step density $\rho$, so that it should be straightforward to use
an optimum density $\rho$ based on an error estimator, to minimize
the integrated error over an orbit for the scheme at hand, as done
in Refs.~\onlinecite{Richardson_2011} and \onlinecite{Finn_2015}. Therefore, it is
possible to combine the formulations of this paper to give an adaptive
second order accurate variational integrator. We leave further details
to a future publication.

\section{Summary and discussion\label{sec:discussion}}

In previous work\cite{Ellison_2018,Ellison_thesis}, the concept of proper degeneracy for a discrete time-stepping scheme for a degenerate variational system was introduced. In these works, the focus was on systems governed by a phase space Lagrangian, which produces a system of first order differential equations, the Hamiltonian equations, in canonical or noncanonical variables. This concept relates to a discretization that preserves the first order nature of the Hamiltonian equations on phase space, i.e. is a single-step rather than a multistep scheme. Multistep schemes are to be avoided in variational systems because they can possess parasitic modes that can grow unphysically\cite{Ellison_2018}. For some examples, the single-step property can be determined by inspection simply. But it is in fact common to have a system that appears to have a multistep nature, but can be reduced to a form where the single-step property is evident. However, finding the right substitutions is not always so straightforward\cite{Ellison_2018}. In this reference, a method of addressing this single-step vs.~multistep issue in terms of the rank of the discrete Hessian was developed. In Refs.\,\onlinecite{Ellison_2018,Ellison_thesis}, schemes that preserve this single-step nature were called degenerate variational integrators or DVIs. 

The schemes developed in Ref.\,\onlinecite{Ellison_2018,Ellison_thesis} are all first order accurate. One aim of this paper is to develop second order accurate DVIs. A commonly used method of developing a second order accurate scheme from a first order variational scheme is a special case of a composition method\cite{Hairer_2006}. This involves composing the first order scheme $\Phi_{h}$ with its adjoint $\Phi_{h}^{\dagger}=\Phi_{-h}^{-1}$, and this method works well for discretizations that preserve the two-form $\omega_{0}$ of the original ODE system; this form is independent of $h$. However, for other schemes the discrete equations preserve a two-form that depends on the time step, $\omega=\omega(h)$. The adjoint of such a scheme preserves $\omega(-h)$ and it is not obvious whether the composed map preserves a two-form at all if $\omega(h)\neq\omega(-h)$. In this paper we consider an example of a simple autonomous Hamiltonian system in canonical variables, i.e. preserving the two-form $\omega_{0}=dq\wedge dp$ and a discretization of its phase space Lagrangian. This scheme preserves another form $\omega(h)=\omega_{0}+O(h)$, so that $\omega(h)\neq\omega(-h)$. Numerically, we find that, for some Hamiltonians, the orbits of the composed scheme spiral out with increasing $t$, the growth rate of the energy behaving like $\gamma=O(h^{2})$, showing that composing with the adjoint does not lead to a scheme with a preserved two-form in general, and therefore does not possess the advantageous properties of variational (symplectic) integration.

In the place of the composition method, we have constructed two centered schemes, involving a processing scheme to advance some of the variables to the half time step, and centering the other variables either in a midpoint or a trapezoidal manner. We call these schemes the midpoint DVI (MDVI) scheme and the trapezoidal DVI (TDVI) scheme. We have shown these schemes to be second order accurate by a backward error analysis and derived the properly degenerate property by computing the rank of the discrete Hessian (as well as by inspection.) We have also applied the midpoint and trapezoidal DVI schemes to two systems of importance to plasma physics, namely the magnetic field line system and the guiding center system. Both of these systems are in a restricted class of noncanonical variables. The numerical results show the anticipated positive properties, namely the benefits of degenerate variational integration, the lack of parasitic modes, and second order accuracy.

The second aim of this paper relates to using non-uniform time steps. This method has been developed for Hamiltonian systems in canonical variables\cite{Hairer_2006}. In this paper we show how to write a variational principle in extended phase space for systems with this class of noncanonical variables. Further, using an error estimator, it is possible to make the time step \emph{adaptive}, by minimizing the total error along an orbit, as in Ref.\,\onlinecite{Richardson_2011}.

We have first reviewed the extended phase space action
principle for one degree of freedom Hamiltonian systems in canonical variables with action
$S=\int\left(p\dot{q}-H\right)dt$, allowing variable time steps.
For canonical variables, this method involves a discretization of the
action with a constraint related to the variation of the time stepping,
producing a canonical symplectic integrator in the extended phase
space $(q,p)\to(q,p,w,\pi)$, where the extra canonical pair are time
and its canonical conjugate. The extension to noncanonical variables
applies to the restricted class of systems discussed earlier, with
variables $(\boldsymbol{x},\boldsymbol{y})$ and an action of the
form $\int(f_{i}\dot{x}_{i}-H(\boldsymbol{x},\boldsymbol{y}))dt$.
The two well-known examples of Hamiltonian systems in noncanonical
variables of importance to plasma physics, namely the integration
of magnetic field lines and the guiding center equations, can be obtained
via an action of this restricted noncanonical form. We have shown
how to write an extended phase space action for this class of noncanonical
variables with nonuniform time stepping. We have developed discretizations
that lead again to DVIs. The generalization of the extended phase space method, to noncanonical variables and to the second
order accurate DVI schemes introduced in this paper, is straightforward. This capacity
for nonuniform time stepping leads immediately to the capability for adaptive time stepping,
as described for symplectic integrators in Ref.~\onlinecite{Richardson_2011}.


\section*{Appendix A: Detailed proofs of the DVI single-step property}\label{AppendixA}
\begin{theorem}[linearized single-step property]\label{l_onestep_appendix}
Let $L_d(z_1,z_2)$ be a properly-degenerate discrete Lagrangian, and introduce the $m\times m$- matrices $[A(z_1,z_2)],[B(z_1,z_2)],[C(z_1,z_2,z_3)]$ with components
\begin{align}
A_{ij}(z_1,z_2) =& \mathcal{M}_{ji}(z_1,z_2)\\
B_{ij}(z_1,z_2) = &\mathcal{M}_{ij}(z_1,z_2)\\
C_{ij}(z_1,z_2,z_3) = & \frac{\partial L_d}{\partial z_2^i\partial z_2^j}(z_1,z_2) + \frac{\partial^2 L_d}{\partial z_1^i \partial z_1^j}(z_2,z_3).
\end{align} 
Under the following transversality assumptions,
\begin{enumerate}
\item[\emph{(G1)}] For each $(z_1,z_2),(z_1^\prime,z_2^\prime)\in Z\times Z$ near the diagonal, $\text{\emph{im}} [A(z_1,z_2)]\cap \text{\emph{im}} [B(z_1^\prime,z_2^\prime)] = \{0\}$.
\item[\emph{(G2)}] For each $(z_1,z_2),(z_1^\prime,z_2^\prime)\in Z\times Z$ near the diagonal in $Z\times Z$ and $(z_1^{\prime\prime},z_2^{\prime\prime},z_3^{\prime\prime})\in Z\times Z\times Z$ near the diagonal in $Z\times Z\times Z$, 
\[
[C(z_1^{\prime\prime},z_2^{\prime\prime},z_3^{\prime\prime})](\text{\emph{ker}}[B(z_1^\prime,z_2^\prime)])
\]
is a graph over $\text{\emph{im}}[A(z_1,z_2)]$,
\end{enumerate}
the discrete Euler-Lagrange equations linearized about a trajectory $k\mapsto z^0_k$ whose neighboring samples satisfy $|z_{k+1}^0-z_{k}^0| < \delta$ for some small $\delta >0$ independent of $k$  are equivalent to a single-step method.
\end{theorem}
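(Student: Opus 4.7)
The plan is to promote the projection argument of the sketch into a rigorous one-step reformulation of the linearized scheme, with the only nontrivial work being the proof of bijectivity of a certain auxiliary map built from $A$, $B$, $C$ at adjacent steps. First I would linearize the DEL equations around $k\mapsto z_k^0$ as in the sketch to obtain
\begin{equation*}
A_{ij}(k)\,\delta z_{k-1}^j + C_{ij}(k)\,\delta z_k^j + B_{ij}(k)\,\delta z_{k+1}^j = 0,
\end{equation*}
with $A,B,C$ defined along the reference trajectory by \eqref{A_shorthand}--\eqref{C_shorthand}. Proper degeneracy forces $\text{rank}\,[A(k)] = \text{rank}\,[B(k)] = m/2$, and hypothesis (G1) applied at the (near-diagonal) pairs arising along $k\mapsto z_k^0$ gives $X_k\cap Y_k = \{0\}$ for $X_k = \text{im}\,[A(k)]$ and $Y_k = \text{im}\,[B(k)]$; counting dimensions yields $\mathbb{R}^m = X_k\oplus Y_k$ and the associated projectors $[\pi_X(k)]$, $[\pi_Y(k)]$.

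Next I would apply $[\pi_X(k+1)]$ and $[\pi_Y(k)]$ to the linearized DEL at the appropriate steps to obtain the paired constraints
\begin{align*}
\pi_X(k+1)\,C(k+1)\,\delta z_{k+1} &= -A(k+1)\,\delta z_k,\\
B(k)\,\delta z_{k+1} &= -\pi_Y(k)\,C(k)\,\delta z_k,
\end{align*}
and repackage them as the single linear map
\begin{equation*}
T_k:\mathbb{R}^m \longrightarrow X_{k+1}\oplus Y_k,\qquad T_k(v) = \bigl(\pi_X(k+1)\,C(k+1)\,v,\;B(k)\,v\bigr).
\end{equation*}
Since both the domain and the codomain of $T_k$ have dimension $m$, the existence and uniqueness question reduces to $\ker T_k = \{0\}$; once that is established, the linearized dynamics becomes the explicit single-step map $\delta z_{k+1} = T_k^{-1}\bigl(-A(k+1)\,\delta z_k,\;-\pi_Y(k)\,C(k)\,\delta z_k\bigr)$, which is what the theorem asserts.

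The heart of the proof is then to verify $\ker T_k = \{0\}$ using hypothesis (G2). Any $v \in \ker T_k$ lies in $\ker\,[B(k)]$ and satisfies $[C(k+1)]v \in Y_{k+1}$. The graph condition in (G2) delivers two facts at once: first, $[C(k+1)](\ker\,[B(k)]) \cap Y_{k+1} = \{0\}$, which forces $[C(k+1)]v = 0$; and second, because $\dim\ker\,[B(k)] = m/2 = \dim X_{k+1}$ (the base of the graph), the restriction $[C(k+1)]|_{\ker[B(k)]}$ is a bijection onto its image and hence injective, so $[C(k+1)]v = 0$ with $v\in\ker\,[B(k)]$ gives $v = 0$. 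The smallness hypothesis $|z_{k+1}^0 - z_k^0|<\delta$ places all relevant pairs and triples inside the near-diagonal neighborhoods where (G1)--(G2) are assumed, making the construction uniform in $k$. The main obstacle is unpacking (G2): one must recognize that "graph over $X_{k+1}$" encodes precisely the two transversality statements needed to collapse $\ker T_k$, after which the remaining work is linear-algebraic bookkeeping.
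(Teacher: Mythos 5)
Your proposal is correct and follows essentially the same route as the paper: the same linearization, the same direct-sum decomposition $\mathbb{R}^m = X_k\oplus Y_k$ from proper degeneracy and (G1), the same projected pair of equations packaged into a map onto $X_{k+1}\times Y_k$, and the same use of (G2) — via the dimension count $\dim[C(k+1)](\ker[B(k)])=m/2$ and the triviality of its intersection with $Y_{k+1}$ — to kill the kernel. Your splitting of the graph condition into the two explicit consequences (trivial intersection with $Y_{k+1}$, injectivity of $[C(k+1)]$ on $\ker[B(k)]$) is just a transparent restatement of the paper's argument with the map $\Gamma$.
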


\begin{remark}
If $L_d$ is some properly-degenerate discrete Lagrangian satisfying (G1) and (G2), then all properly-degenerate discrete Lagrangians in a neighborhood of $L_d$ will satisfy (G1) and (G2). In practice this observation greatly simplifies the task of verifying (G1) and (G2) because the $h\rightarrow 0$ limit of a properly-degenerate discrete Lagrangian is usually quite simple to analyze. 
\end{remark}

\begin{remark}
The condition $|z_{k+1}^0-z_{k}^0| < \delta$ is generally satisfied provided that the timestep $h$ in a variational integrator is sufficiently small.
\end{remark}

\begin{proof}
The proof picks up at the end of the proof sketch of Theorem \ref{l_onestep}.

To that end, consider the linear map $\Phi:\mathbb{R}^N\rightarrow X_{k+1}\times Y_k$ given by
\begin{align}
\Phi(\delta z) = ([\pi_X(k+1)][C(k+1)]\delta z,[B(k)]\delta z).
\end{align}
By Eqs.\,\eqref{lprf_one}-\eqref{lprf_two} it is enough to show that the kernel of $\Phi$ is trivial. To see that this is so, first note that by transversality assumption (G2) the linear space $[C(k+1)](\text{ker}[B(k)])$ must be of the form
\begin{align}
&[C(k+1)](\text{ker}[B(k)]) \nonumber\\
&= \{w_X+\Gamma(w_X)\mid w_X\in \text{im}[A(k+1)]\},
\end{align}
where $\Gamma:\text{im}[A(k+1)]\rightarrow \text{im}[B(k+1)]$ is a linear map. In particular, $\text{dim}\,[C(k+1)](\text{ker}[B(k)]) = \text{dim}\,\text{im}[A(k+1)] = m/2$, which by the rank-nullity theorem implies that $[C(k+1)]\mid \text{ker}[B(k)]$ is invertible onto its image.
Now suppose that $\Phi(\delta z) =0 $. This implies that $\delta z$ must be in the kernel of $[B(k)]$. Therefore $[C(k+1)]\delta z\in [C(k+1)](\text{ker}[B(k)])$ must have the form
\begin{align}
[C(k+1)]\delta z = w_X + \Gamma(w_X),
\end{align}
for a unizue $w_X\in \text{im}[A(k+1)]$. But because $[\pi_X(k+1)][C(k+1)]\delta z=0$, it must be the case that
\begin{align}
0 = & [\pi_X(k+1)][C(k+1)]\delta z\nonumber\\
=& [\pi_X(k+1)](w_X + \Gamma(w_X))\nonumber\\
=& w_X,
\end{align}
which implies that $\delta z = 0$.
\end{proof}

\begin{theorem}[nonlinear single-step property]\label{nl_onestep_appendix}
Let $L_d(z_1,z_2)$ be a properly-degenerate discrete Lagrangian that satisfies the transversality conditions \emph{(G1)} and \emph{(G2)} given in the statement of Theorem \ref{l_onestep}. Solutions $k\mapsto z_k$ of the discrete Euler-Lagrange equations near a given solution $k\mapsto z_k^0$ that satisfies $|z_{k+1}^0-z_{k}^0|< \delta$ for some sufficiently small $k$-independent $\delta >0$ are generated by a single-step method $\varphi:Z\rightarrow Z$. In other words,
\[
z_{k+1} = \varphi(z_k)
\]
for each $k$.
\end{theorem}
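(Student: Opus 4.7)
The plan is to upgrade the linearized single-step property of Theorem \ref{l_onestep_appendix} to the full nonlinear statement via the implicit function theorem (IFT). The invertibility hypothesis needed for IFT is exactly the content of Theorem \ref{l_onestep_appendix}: at the reference trajectory, the linear map $\Phi: \delta z_{k+1} \mapsto ([\pi_X(k+1)][C(k+1)]\delta z_{k+1}, [B(k)]\delta z_{k+1})$ is an isomorphism from $\mathbb{R}^m$ onto $X_{k+1}\times Y_k$.

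First I would construct an $m$-dimensional nonlinear system $\Psi(z_k, z_{k+1}) = 0$ whose partial derivative with respect to $z_{k+1}$ at the reference trajectory coincides with $\Phi$. The natural choice builds $\Psi$ by combining the $\pi_Y$-projection of the DEL at time $k$ with the $\pi_X$-projection of the DEL at time $k+1$, using smooth nonlinear projections defined in a neighborhood of the diagonal via hypothesis (G1). At the linearized level these projections reduce the DEL to the system \eqref{lprf_one}--\eqref{lprf_two} in $(\delta z_k, \delta z_{k+1})$ alone, and Theorem \ref{l_onestep_appendix} shows the resulting Jacobian in $\delta z_{k+1}$ is invertible. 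IFT then produces a smooth map $\varphi: U \subset Z \to Z$ on a neighborhood $U$ of $z_k^0$ with $\Psi(z_k, \varphi(z_k)) = 0$. Autonomy of $L_d$ makes $\varphi$ independent of $k$, and the uniform smallness condition $|z_{k+1}^0 - z_k^0| < \delta$ lets the IFT neighborhood estimates be chosen $k$-independent.

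Next I would verify equivalence between trajectories generated by $\varphi$ and nearby solutions of the full DEL. In one direction, if $z_{k+1} = \varphi(z_k)$ for every $k$, then the $\pi_Y$-part of the DEL at $k$ and the $\pi_X$-part of the DEL at $k$ (the latter arising by shifting back the $\pi_X$-projection used at step $k-1$) both hold; by the $X \oplus Y$ decomposition guaranteed by (G1), their sum is the full DEL at $k$. In the other direction, a nearby DEL trajectory automatically satisfies $\Psi(z_k, z_{k+1}) = 0$, so the uniqueness clause of IFT forces $z_{k+1} = \varphi(z_k)$.

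The main obstacle I expect is controlling the nonlinear projections $\pi_X, \pi_Y$, which generally depend on multiple consecutive $z$-values and can therefore reintroduce spurious dependence on $z_{k-1}$ or $z_{k+2}$ in the projected equations. Hypothesis (G2), required throughout a neighborhood of the diagonal rather than at a single point, is precisely what is needed to close this bookkeeping: its graph-like condition on $[C](\text{ker}[B])$ over $\text{im}[A]$ guarantees that the invertibility of $\Phi$ established at the reference trajectory persists across the neighborhood on which IFT is applied, so that the nonlinear argument closes up in the same way the linearized one does.
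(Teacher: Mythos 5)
Your overall strategy---reduce the discrete Euler--Lagrange (DEL) equations to a two-point system $\Psi(z_k,z_{k+1})=0$ and apply the implicit function theorem, with invertibility supplied by the linearized theorem---is the same as the paper's, and you have correctly located the crux: the DEL at step $k$ involves $z_{k-1}$ and the DEL at step $k+1$ involves $z_{k+2}$, so a genuinely nonlinear reduction is needed before any function of $(z_k,z_{k+1})$ alone exists. But the mechanism you propose for that reduction does not work, and your claim that (G2) ``closes the bookkeeping'' is misplaced. Applying a projection $[\pi_Y(k)]$ (linear, or any pointwise ``nonlinear projection'') to $\alpha(z_{k-1},z_k)+\beta(z_k,z_{k+1})=0$ annihilates only the \emph{linearized} dependence on $z_{k-1}$ at the reference trajectory; the projected expression still contains $[\pi_Y(k)]\,\alpha(z_{k-1},z_k)$, a nontrivial nonlinear function of $z_{k-1}$, so your $\Psi$ is not in fact a function of $(z_k,z_{k+1})$. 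Hypothesis (G2) does not repair this: in the paper it is used only to show that the final Jacobian in $z_{k+1}$ has trivial kernel, exactly as in the linearized proof, and plays no role in eliminating the neighboring time levels.

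The missing idea is the paper's foliation argument. Proper degeneracy implies that $\alpha_{\tilde z}(z)=\alpha(z,\tilde z)$ and $\beta_{\tilde z}(z)=\beta(\tilde z,z)$ have constant rank $m/2$, so by the constant-rank theorem their level sets foliate $Z$ by $m/2$-dimensional leaves with quotient maps $\gamma^\alpha_z,\gamma^\beta_z$ onto $\mathbb{R}^{m/2}$. The operator $E_{z_k}(z_{k-1},z_{k+1})$ depends on $z_{k-1}$ only through its $\alpha$-leaf and on $z_{k+1}$ only through its $\beta$-leaf, hence descends \emph{exactly}---not merely to first order---to a map $\varepsilon_{z_k}(X,Y)$ on $\mathbb{R}^{m/2}\times\mathbb{R}^{m/2}$. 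Hypothesis (G1) makes $\varepsilon_{z_k}$ a local diffeomorphism, so the DEL at step $k$ is equivalent to the pair $\gamma^\alpha_{z_k}(z_{k-1})=\widehat{X}_{z_k}(0)$ and $\gamma^\beta_{z_k}(z_{k+1})=\widehat{Y}_{z_k}(0)$: one exact half-dimensional condition on $(z_{k-1},z_k)$ and one on $(z_k,z_{k+1})$. Shifting the first condition forward by one step and pairing it with the second yields the genuinely two-point system $F^\alpha(z_k,z_{k+1})=F^\beta(z_k,z_{k+1})=0$, and only then does your implicit-function-theorem step (whose kernel computation uses (G1) and (G2) as in the linearized theorem) go through. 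Without this exact splitting, your construction of the one-step map is not well defined.
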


\begin{proof}
First we introduce some convenient notation. Let $\alpha:Z\times Z\rightarrow \mathbb{R}^m$ and $\beta:Z\times Z\rightarrow\mathbb{R}^m$ be the functions defined by
\begin{align}
\alpha_i(z_1,z_2) =& \frac{\partial L_d}{\partial z_2^i}(z_1,z_2)\\
\beta_i(z_1,z_2) = & \frac{\partial L_d}{\partial z_1^i}(z_1,z_2).
\end{align}
For each $\tilde{z}\in Z$, also define the related functions $\alpha_{\tilde{z}}:Z\rightarrow \mathbb{R}^m$ and $\beta_{\tilde{z}}:Z\rightarrow \mathbb{R}^m$ according to 
\begin{align}
\alpha_{\tilde{z}}(z) = & \alpha(z,\tilde{z})\\
\beta_{\tilde{z}}(z) = & \beta(\tilde{z},z).
\end{align}
Finally, introduce the discrete Euler-Lagrange operator $E:Z\times Z\times Z\rightarrow \mathbb{R}^m$ given by
\begin{align}
E(z_1,z_2,z_3) = \alpha(z_1,z_2) + \beta(z_2,z_3),
\end{align}
and the associated function $E_z:Z\times Z\rightarrow\mathbb{R}^m$ given by
\begin{align}
E_z(z_1,z_2) = E(z_1,z,z_2).
\end{align}
In terms of these notations, the discrete Euler-Lagrange equations may be written in several equivalent ways:
\begin{align}
0 =& E(z_{k-1},z_k,z_{k+1}) \\
=& E_{z_{k}}(z_{k-1},z_{k+1})\\
=& \alpha_{z_k}(z_{k-1}) + \beta_{z_k}(z_{k+1})\\
=& \alpha(z_{k-1},z_k) + \beta(z_k,z_{k+1}).
\end{align}

By the constant-rank theorem, for each $z$ the level sets of either $\alpha_z$ or $\beta_z$ are $m/2$-dimensional submanifolds that foliate $Z$. We will call a level set of $\alpha_z$ an $\alpha$-\emph{leaf}, and a level set of $\beta_z$ a $\beta$-\emph{leaf}. We may choose mutually-disjoint neighborhoods $U_{k}$ of each $z_k^0$ such that the intersection of either the $\alpha$-foliation or the $\beta$-foliation with $U_k$
 is diffeomorphic to $\mathbb{R}^{m/2}\times\mathbb{R}^{m/2}$. In particular we may define smooth maps
\begin{align}
\gamma^\alpha_z: \cup_k U_k\rightarrow \mathbb{R}^{m/2}\\
\gamma^\beta_z: \cup_k U_k \rightarrow \mathbb{R}^{m/2}
\end{align} 
such that the restriction of $\gamma_z^\alpha$ ($\gamma^\beta_z$) to $U_k$ is a quotient map for the $\alpha$-foliation ($\beta$-foliation) intersected with $U_k$. Moreover, we may assume without loss of generality that $\gamma^\alpha_{z}(z_k^0) = 0 = \gamma^\beta_z(z_k^0)$, independent of $z$. 

Because, for each $z$, $\alpha_z$ is constant along the $\alpha$-leaves and $\beta_z$ is constant along the $\beta$-leaves, the discrete Euler-Lagrange operator $E(z_1,z,z_3)$ only depends on the $\alpha$-leaf that contains $z_1$ and the $\beta$-leaf that contains $z_3$. Therefore for each $z_k\in Z$ there must be a function $\varepsilon_{z_k}:\mathbb{R}^{m/2}\times\mathbb{R}^{m/2}\rightarrow \mathbb{R}^m$ defined by the relation 
\begin{gather}
E_{z_k}(z_{k-1},z_{k+1}) = \varepsilon_{z_k}(X, Y)\label{descend}\\
X = \gamma^\alpha_{z_k}(z_{k-1})\\
Y = \gamma^\beta_{z_k}(z_{k+1})
\end{gather} 
for $z_k\in U_k$.

By hypothesis (G1) the derivative $D\varepsilon_{z_k}(X,Y):\mathbb{R}^{m/2}\times\mathbb{R}^{m/2}\rightarrow \mathbb{R}^m$ is invertible for each $(X,Y)$ and $z_k\in U_k$. Therefore by the inverse function theorem the function $\varepsilon_{z_k}$ restricts to a diffeomorphism on a neighborhood of $(X_{k-1}(z_{k}),Y_{k+1}(z_k)) = (\gamma^\alpha_{z_k}(z_{k-1}^0),\gamma^\beta_{z_k}(z_{k+1}^0))$. At the price of possibly shrinking the $U_k$, we may assume that this neighborhood is all of $\mathbb{R}^{m/2}\times\mathbb{R}^{m/2}$.

Let $(\widehat{X}_{z_k},\widehat{Y}_{z_k}) = (\varepsilon_{z_k})^{-1} $ be the inverse of the diffeomorphism $\varepsilon_{z_k}:\mathbb{R}^{m/2}\times\mathbb{R}^{m/2}\rightarrow\mathbb{R}^m$. The discrete Euler-Lagrange equations $E(z_{k-1},z_k,z_{k+1})= 0$ imply
\begin{align}
0 =  \varepsilon_{z_k}(\gamma^\alpha_{z_k}(z_{k-1}),\gamma^\beta_{z_{k}}(z_{k+1})),
\end{align}
which is equivalent to
\begin{align}
\gamma^\alpha_{z_k}(z_{k-1}) =& \widehat{X}_{z_k}(0)\label{tt}\\
\gamma^\beta_{z_k}(z_{k+1}) =&\widehat{Y}_{z_k}(0).
\end{align}
In particular shifting Eq.\,\eqref{tt} gives the $m$ equations for the $m$ unknowns $z_{k+1}$
\begin{align}
F^\alpha(z_{k},z_{k+1})&=\gamma^\alpha_{z_{k+1}}(z_{k})- \widehat{X}_{z_{k+1}}(0)=0 \\
F^\beta(z_k,z_{k+1})&=\gamma^\beta_{z_k}(z_{k+1})-\widehat{Y}_{z_k}(0) =0.
\end{align}
The proof will therefore be complete if we can show that the mapping $z_{k+1}\mapsto (F^\alpha(z_k,z_{k+1}),F^\beta(z_k,z_{k+1}))$ is a diffeomorphism for fixed $z_k$ in a neighborhood of $z_{k+1}^0$.

To that end, note that by the implicit function theorem it is enough to show that the linear map
\begin{align}
&\Phi:\delta z_{k+1}\nonumber\\
&\mapsto \bigg(D_{z_{k+1}}F^\alpha(z_k^0,z_{k+1}^0)[\delta z_{k+1}], D_{z_{k+1}}F^\beta(z_k^0,z_{k+1}^0)[\delta z_k] \bigg)
\end{align}
has trivial kernel. Demonstrating that this is so amounts to reproducing the proof of Theorem \ref{l_onestep}. The summary is the following.

Suppose that $\delta z_{k+1}$ is in the kernel. Because $\gamma^\alpha_{z_{k+1}}(z_{k}^0) = 0$ for each $z_{k+1}$, $\delta z_{k+1}$ must satisfy
\begin{align}
0=&\delta \widehat{X} = \frac{d}{d\epsilon}\bigg|_0 \widehat{X}_{z_{k+1}^0+\epsilon \delta z_{k+1}}(0)\label{ceqn}\\
0=&\delta \gamma^\beta = \frac{d}{d\epsilon}\bigg|_0 \gamma^\beta_{z_{k}^0}(z_{k+1}^0+\epsilon \delta z_{k+1})\label{tangent}.
\end{align}
The second equation \eqref{tangent} will be satisfied if and only if $\delta z_{k+1}$ tangent to the $\beta$-leaf passing through $z_k^0$. This means 
\begin{align}
B_{ij} \delta z_{k+1}^j = 0,
\end{align}
or $\delta z_{k+1}$ is in the kernel of the matrix $[B]$ defined in Eq.\,\eqref{B_shorthand}. Moving now to Eq.\,\eqref{ceqn},  note that because $\varepsilon_{z_{k+1}}(\widehat{X}_{z_{k+1}}(0),\widehat{Y}_{z_{k+1}}(0)) = 0$ by definition, differentiating in $z_{k+1}$ at $z_{k+1}^0$ gives
\begin{align}
0 = & D_X\varepsilon_{z_{k+1}^0}(0,0)[\delta \widehat{X}]+ D_Y\varepsilon_{z_{k+1}^0}(0,0)[\delta\widehat{Y}] \nonumber\\
&+ \frac{d}{d\epsilon }\bigg|_0 \varepsilon_{z_{k+1}^0+\epsilon \delta z_{k+1}}(0,0),\label{need_ds}
\end{align}
where we have used $(\widehat{X}_{z_{k+1}^0}(0),\widehat{Y}_{z_{k+1}^0})(0) = (0,0)$ by our normalization convention for $\gamma^\alpha,\gamma^\beta$, and we have introduced
\begin{align}
\delta \widehat{Y} = \frac{d}{d\epsilon}\bigg|_0 \widehat{Y}_{z_{k+1}^0+\epsilon \delta z_{k+1}}(0). 
\end{align}
Each of the derivatives in Eq.\,\eqref{need_ds} may be expressed in terms of derivatives of $E$ by implicitly differentiating Eq.\,\eqref{descend}, which leads to 
\begin{align}
[A] \delta X + [C]\delta z_{k+1} + [B]\delta Y = 0, \label{basic_eqn}
\end{align}
where $\delta X$ is any vector that satisfies 
\begin{align}
D\gamma^\alpha_{z_{k+1}^0}(z_{k}^0)[\delta{X}] = \delta \widehat{X},\label{proj_A}
\end{align}
$\delta Y$ is any vector that satisfies 
\begin{align}
D\gamma^\beta_{z_{k+1}^0}(z_{k+2}^0)[\delta{Y}] = \delta \widehat{Y},
\end{align} 
and the matrices $[A],[B],[C]$ are defined in Eqs.\,\eqref{A_shorthand}-\eqref{C_shorthand}. Now using $\delta \widehat{X} = 0$, Eq.\,\eqref{proj_A} implies that $\delta X$ must be in the kernel of $[A]$. Therefore if we apply the projection matrix $[\pi_X]$ guaranteed by transversality assumption (G1) to Eq.\,\eqref{basic_eqn}, we obtain
\begin{align}
[\pi_X][C]\delta z_{k+1} = 0.
\end{align}
But because $\delta z_{k+1}$ is in the kernel of $[B]$, transversality assumption (G2) implies that $\delta z_{k+1}=0$.

\end{proof}


\section*{Appendix B: Reversibility: a warning}\label{AppendixB}


We first consider
the one degree of freedom Hamiltonian
\begin{equation}
H=\frac{p^{2}+q^{2}}{2}+\frac{\alpha qp^{2}}{2}.\label{eq:ReversibleHamiltonian}
\end{equation}
Applying either the scheme in Eq.~(\ref{eq:Mod-SE-scheme}) 
or its adjoint in Eq.~(\ref{eq:Adj-Mod-SE-scheme}),
we find, of course,  first order accuracy but also good long time properties, the latter because of the preservation of the two-forms in Eqs.~\eqref{eq:Two-form-with-O(h)},\eqref{eq:Two-form-adjoint}. If we
compose the two schemes, we also find good long time properties, and
with second order accuracy. However, these favorable properties are
traced not to the preservation of a two-form but to the reversibility of the Hamiltonian in Eq.~(\ref{eq:ReversibleHamiltonian}):
the symmetry $R:(q,p)\mapsto(q,-p)$ leaves $H$ invariant, and the fixed points of this symmetry are $p=0$. This reversibility is inherited by the exact time-$h$ map $\Phi_h$, i.e.~$\Phi_h$  satisfies
\begin{align}
R\circ\Phi_{h}=\Phi_{h}^{-1}\circ R.\label{Map-reversibility}
\end{align}
If a discrete scheme $T_h$ also satisfies this map reversibility, it should have the favorable properties due to reversibility\cite{Finn_2015}. In fact, neither $T_h$ nor $T_h ^{\dagger}$ (where adjoint is defined as  $T_h^{\dagger}=T_{-h}^{-1}$; see Ref.~\onlinecite{Hairer_2006}) satisfy this map reversibility property. However, $T_h$ does satisfy the related property \emph{weak reversibility}\cite{Finn_2015}, 
\[
R\circ T_{h}=T_{-h}\circ R,
\]
and similarly for $T_h^{\dagger}$. From this property it follows that the composed scheme $\Psi_h=T_{h}\circ T_{h}^{\dagger}$
(or $T_{h}^{\dagger}\circ T_{h}$) is also weakly reversible, and because it is self-adjoint, it is also reversible\cite{Finn_2015}. 
This map reversibility appears to be responsible for the observed good long-time behavior.

As discussed in Refs.\,\onlinecite{Richardson_2011} and \onlinecite{Finn_2015}, it can be misleading
to evaluate a scheme by testing it on a reversible Hamiltonian system, because good
results might be obtained solely due to the reversibility property and not from any property inherited from the variational nature.

The Hamiltonian $H=(p^{2}+q^{2})/2+\alpha q p^{3}/3$
considered in Sec.~\ref{Numerical-no-go-composing} has another symmetry $(q,p)\mapsto(-q,-p)$, but this symmetry preserves the point $(q,p)=(0,0)$ rather than a line ($p=0$), and such a symmetry does not endow any special properties, so we do not consider this Hamiltonian to be reversible. See Refs.\,\onlinecite{Richardson_2011,Finn_2015}. And indeed, the results in  Sec.~\ref{Numerical-no-go-composing} show that the orbits spiral out, showing the lack of a preserved two-form.

\section*{Appendix C: Analog of the symplectic Euler scheme for the magnetic field line
problem\label{AppendixC}}

Here, we consider the most direct analog to the symplectic Euler scheme
for canonical variables, applied to the class of noncanonical systems of Sec.~\ref{sec:DVI2}. We specialize to the magnetic field line problem for concereteness, and have
\begin{equation}
L_{d}(x_{k}^{1},x_{k}^{2},x_{k+1}^{2})=A_{2}(x_{k}^{1},x_{k+1}^{2})(x_{k+1}^{2}-x_{k}^{2})+hA_{3}(x_{k}^{1},x_{k+1}^{2}).\label{L_d_MFL_SE}
\end{equation}
If this system has a preserved two-form with $\omega(-h)=\omega(h)$,
it can be composed with its adjoint to preserve $\omega$ and obtain second order accuracy.
Note that $L_d$ depends on $x^{1}$ at only one time level and therefore, as noted in Sec.~\ref{sec:DVI}, its
discrete Hessian has rank 1. This shows that the system is indeed properly degenerate, and is a DVI.

Its preserved two-form is found simply by looking at the endpoint
terms in $dS$ for $k=0,1$:
\[
S=A_{2}(x_{0}^{1},x_{1}^{2})(x_{1}^{2}-x_{0}^{2})+hA_{3}(x_{0}^{1},x_{1}^{2}),
\]
from which we find
\[
dS=\left[A_{2,1}(x_{0}^{1},x_{1}^{2})(x_{1}^{2}-x_{0}^{2})+hA_{3,1}(x_{0}^{1},x_{1}^{2})\right]dx_{0}^{1}
\]
\[
+\left[-A_{2}(x_{0}^{1},x_{1}^{2})\right]dx_{0}^{2}+\left[A_{2,1}(x_{1}^{1},x_{2}^{2})(x_{2}^{2}-x_{1}^{2})+hA_{3,1}(x_{1}^{1},x_{2}^{2})\right]dx_{1}^{1}
\]
\[
+\left[A_{2,2}(x_{0}^{1},x_{1}^{2})(x_{1}^{2}-x_{0}^{2})+hA_{3,2}(x_{0}^{1},x_{1}^{2})+A_{2}(x_{0}^{1},x_{1}^{2})-A_{2}(x_{1}^{1},x_{2}^{2})\right]dx_{1}^{2}
\]
\[
+\left[A_{2,2}(x_{1}^{1},x_{2}^{2})(x_{2}^{2}-x_{1}^{2})+hA_{3,2}(x_{1}^{1},x_{2}^{2})+A_{2}(x_{1}^{1},x_{2}^{2})-A_{2}(x_{2}^{1},x_{3}^{2})\right]dx_{2}^{2}
\]
\[
+\left[A_{2}(x_{2}^{1},x_{3}^{2})\right]dx_{2}^{2},
\]
with the last term subtracted in the $dx_{2}^{2}$ term. Because of
satisfying the discrete Euler-Lagrange equations, all the terms except
for the endpoint terms
\[
dS=A_{2}(x_{2}^{1},x_{3}^{2})dx_{2}^{2}-A_{2}(x_{0}^{1},x_{1}^{2})dx_{0}^{2}
\]
vanish, and $d^{2}S=0$ leads to the preservation of the two-form is
composed
\begin{equation}
\omega=dA_{2}(x_{0}^{1},x_{1}^{2})\wedge dx_{0}^{2}.\label{eq:MFL_SE_form}
\end{equation}
Upon substituting 
\[
x_{1}^{2}=x_{0}^{2}+hu(x_{0}^{1},x_{1}^{2}),\,\,\,\,u(x_{0}^{1},x_{1}^{2})=-\frac{hA_{3,1}(x_{0}^{1},x_{1}^{2})}{A_{2,1}(x_{0}^{1},x_{1}^{2})},
\]
equal to $hB_{2}(x_{0}^{1},x_{1}^{2})/B_{3}(x_{0}^{1},x_{1}^{2})$
and from the $dx_{0}^{1}$ term in $dS$, we find the preserved two-form
$\omega$ equals
\[
\left(A_{2,1}(x_{0}^{1},x_{1}^{2})+hA_{2,2}(x_{0}^{1},x_{1}^{2})u(x_{0}^{1},x_{1}^{2})+O(h^{2})\right)dx_{0}^{1}\wedge dx_{0}^{2},
\]
the flux invariant $B_{3}dx_{0}^{1}\wedge dx_{0}^{2}$
of the continuous system plus a $O(h)$ correction, proportional to $A_{2,2}$. As for the modified
symplectic Euler scheme of Sec.~\ref{Sec.Composing}, the adjoint of this scheme
preserves the same form but with $\omega(h)\to\omega(-h)\neq\omega(h)$,
and therefore the composition of this scheme with its adjoint cannot
be assured of having a preserved two-form. The essential difference between this scheme and the symplectic Euler scheme for a canonical system is the dependence of $A_2$ on $x^2$ in Eq.~\eqref{L_d_MFL_SE}.




\section*{Acknowledgments}
This material is based upon work supported by the National Science Foundation under Grant No.\,1440140, while the authors JB and JMF were in residence at the Mathematical Sciences Research Institute in Berkeley, California, during the fall semester of 2018. Research presented in this article was supported by the Los Alamos National Laboratory LDRD program under project number 20180756PRD4. A portion of this work was performed under the auspices of the U.S. Department of Energy by Lawrence Livermore National Laboratory under Contract DE-AC52-07NA27344.
\bibliography{cumulative_bib_file_older.bib}



\end{document}